\spnewtheorem{numclaim}[theorem]{Claim}{\bfseries}{\itshape}
\spnewtheorem{observation}[theorem]{Observation}{\bfseries}{\itshape}
\newcommand\squigglea{%
{\usefont{U}{lasy}{m}{n}\char58\char58}
}
\newcommand\squiggleLR{%
{\usefont{U}{lasy}{m}{n}\char58\char59}
}
\newcommand\correspondingauthor{\thanks{Corresponding author.\\A preliminary version of the paper appeared at the 50th International Workshop on Graph-Theoretic Concepts in Computer Science (WG 2024)~\cite{wg24}.}}
\newcommand{\changed}[1]{{\color{black}{#1}}}
\newcommand{\tb}[1]{\todo[color=blue!50!white]{#1}}
\newcommand{\discuss}[1]{\todo[inline,color=red!70!white]{Discuss (TB$\rightarrow$DM): #1}}
\newcommand{\calT}{\ensuremath{{\cal T}}}
\newcommand{\leftsquigarrow}{\raisebox{1ex}{\rotatebox{180}{$\,\:\text\squiggleLR\!\,$}}}
\newcommand{\girth}{\ensuremath{\text{fence-girth}}\xspace}
\newcommand{\rad}[1]{\text{\rm rad}(#1)}
\newcommand{\diam}[1]{\text{\rm diam}(#1)}
\renewcommand{\rho}{\ensuremath{r}}
\newcommand{\myalpha}{{\color{black}\ensuremath{\mathbf{a}}}}
\begin{document}
\title{Improved Outerplanarity Bounds for Planar Graphs}
%
%

\author{Therese Biedl\correspondingauthor\inst{1}
\and
Debajyoti Mondal\inst{2}
}
\authorrunning{Biedl and Mondal}
 
\institute{David R. Cheriton School of Computer Science, University of Waterloo, Canada   
\email{biedl@uwaterloo.ca}   \and
Department of Computer Science, University of Saskatchewan, Canada 
\email{d.mondal@usask.ca}}

\maketitle              
\begin{abstract}
In this paper, we study the outerplanarity of planar graphs, i.e., the number of times that we must (in a planar embedding that we can initially
freely choose) remove the outerface vertices until the graph is empty. It is well-known that there are $n$-vertex graphs with outerplanarity $\tfrac{n}{6}+\Theta(1)$, and not difficult to show that the outerplanarity can never be bigger. We give here improved bounds of the form $\tfrac{n}{2g}+2g+O(1)$, where $g$ is the \emph{fence-girth}, i.e., the length of the shortest cycle with
vertices on both sides.
This parameter $g$ is at least the connectivity of the graph, and often bigger; for example, our results imply that planar bipartite graphs have outerplanarity $\tfrac{n}{8}+O(1)$. We also show that the outerplanarity of a planar graph $G$ is at most $\tfrac{1}{2}\diam{G}+O(\sqrt{n})$, where $\diam{G}$ is the diameter of the graph.   All our bounds are tight up to smaller-order terms, and a planar embedding that achieves the outerplanarity bound can be found in linear time.
\keywords{Planar graphs  \and Outerplanarity \and Fence girth \and Diameter.}
\end{abstract} 
 
\discuss{Some renaming of variables should still happen to be consistent.    \newline
* $g$ should perhaps be greek ($\gamma$? or something that captures fence, maybe $\phi$ J: $\varphi$ looks nicer, $\phi$ reminds me of empty set - should I replace?).  T: At this point I have gotten so used to $g$ that $\varphi$ (or perhaps $\psi$) might confuse me.  Let me mull this over longer. \newline
* We're fairly inconsistent whether to write $g/2$, $\tfrac{1}{2}g$ or $\tfrac{g}{2}$. 
My suggestion would be to switch everything to $\tfrac{g}{2}$ when the numerator/denominator have no sub/superscripts and to $\tfrac{1}{2}g^*$ otherwise. 
I'm changing this as I'm finding them. 
}

\discuss{Some English questions. \newline
* outerplanarity or outer-planarity?
\newline
* outerface or outer-face (or even outer face)?
All of the above are correct in some style-guides, so I don't really care but we should be consistent.   Without dash or space is shortest.
}
\section{Introduction}

The \emph{outerplanarity} of a planar graph is a well-known tool, both for deriving efficient algorithms and for proving lower bounds for graph drawings.
It measures how often we have to remove the vertices on the outerface (a \emph{peel}) until the graph is empty.   (Detailed definitions are in Section~\ref{sec:definitions}.)
In this paper, we obtain better upper bounds on the outerplanarity of a planar graph, which is important from the perspective of the following two application areas.

The first application of outerplanarity is to design faster algorithms for various problems in planar graphs. Baker~\cite{DBLP:conf/focs/Baker83} showed that for a planar graph with constant outerplanarity, numerous graph problems, such as independent set, vertex cover, dominating set 
can all be solved in linear time. 
(There are numerous generalizations, see e.g.~%
\cite{DBLP:journals/jgaa/Eppstein99,DBLP:journals/algorithmica/Eppstein00,DBLP:journals/jcss/DemaineHNRT04,DBLP:journals/combinatorica/Grohe03,DBLP:journals/endm/HajiaghayiNRT01}.\changed{)}
The running times of many such algorithms have an exponential dependency on the outerplanarity or related parameters. Hence an upper bound on the outerplanarity  with respect to the size of the graph can provide an estimate of how large of a graph these algorithms may be able to process in practice.

Another major application of outerplanarity 
is to derive lower bounds for various optimization criteria in graph drawing. For example, there exists a planar graph with a fixed planar embedding (known as nested triangles graph) that requires at least a $\tfrac{2}{3}n\times \tfrac{2}{3}n$-grid in any of its straight-line grid drawings that respect the given embedding (attributed to Leiserson \cite{DBLP:conf/focs/Leiserson80} by Dolev, Trickey and Leighton \cite{dolev1983planar}). Here a grid drawing maps each vertex to a grid point and each edge to a straight line segment between its end vertices. The crucial ingredient to their proof is that the nested triangles graph has $\tfrac{n}{3}$ peels (in this embedding), and any embedding-preserving planar straight-line grid-drawing of a planar graph with $k$ peels requires at least a $2k\times 2k$-grid. 
(In fact, this lower bound holds for many other planar graph drawing styles~\cite{DBLP:conf/gd/AlamBRUW15,DBLP:journals/jgaa/GiacomoDLM05,DBLP:journals/ipl/ZhangH05}.)  
Nested triangles graphs have outerplanarity $\tfrac{n}{6}$, and thus gives a lower bound of an $\tfrac{n}{3}\times \tfrac{n}{3}$-grid for the planar straight-line grid drawing even when one can freely choose an embedding to draw the graph. 
This raises a natural question of whether $\tfrac{n}{6}$ is the largest outerplanarity (perhaps up to lower-order terms) that a planar graph can have. This turns out to be true, via a detour into the radius, which we discuss next.

The \emph{eccentricity} of a vertex $v$ in $G$ is the smallest integer $k$ such that the  shortest-path distance from $v$ to any other vertex in $G$ is at most $k$. The \emph{radius} of $G$ (denoted $\text{rad}(G)$) is the smallest eccentricity over all the vertices of $G$, 
while the \emph{diameter} of $G$ (denoted $\text{diam}(G)$) is the largest eccentricity.  For 3-connected planar graphs, Harant~\cite{harant1990upper} proved an upper bound of $\rad{G}\leq \frac{n}{6}+\Delta^*+ \frac{3}{2}$, where $\Delta^*$ is the maximum degree of the dual graph, i.e., the maximum length of a face.
Ali et al.~\cite{DBLP:journals/dm/AliDM12} improved the upper bound to  $\frac{n}{6}+\frac{5\Delta^*}{6}+ \frac{5}{6}$ and 
more generally $\frac{n}{2\kappa}+O(\Delta^*)$, where $\kappa$ is the connectivity of the graph; these bounds are tight within an additive constant. 
This easily implies upper bounds on the outerplanarity.

\begin{restatable}{observation}{SimpleBound}
\label{obs:simpleBound}
Every planar graph $G$ has outerplanarity at most \changed{$\min\{1 + \rad{G}, \tfrac{n+26}{6}\}$}, and this bound holds even if the spherical embedding of $G$ is fixed.
\end{restatable}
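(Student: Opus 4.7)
The plan is to prove the two bounds separately. For the bound $1+\rad{G}$, I will use a standard peeling argument. Given the fixed spherical embedding, I select a vertex $v$ with eccentricity $\rad{G}$ and choose as the outer face any face of $G$ incident to $v$, so $v$ is removed in peel~$1$. I then show by induction on $j$ that every vertex at graph-distance at most $j$ from $v$ in $G$ is removed no later than peel~$1+j$: in the inductive step, a vertex at distance $j+1$ from $v$ has a neighbor removed in peel~$1+j$, and hence appears on the outer face of the remaining embedded subgraph no later than peel~$2+j$. Since every vertex lies within distance $\rad{G}$ of $v$, the graph is emptied within $1+\rad{G}$ peels.

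For the bound $\tfrac{n+26}{6}$, I will reduce to the case of a 3-connected triangulation and invoke the result of Ali et al. Assuming $n\ge 4$, I triangulate $G$ inside each non-outer face of the given spherical embedding to obtain a 3-connected planar triangulation $G^*\supseteq G$ with $\Delta^*(G^*)=3$. The Ali et al.\ bound gives $\rad{G^*}\le \tfrac{n}{6}+\tfrac{5\cdot 3}{6}+\tfrac{5}{6}=\tfrac{n+20}{6}$, and applying the first bound to $G^*$ yields outerplanarity at most $1+\rad{G^*}\le \tfrac{n+26}{6}$. To transfer this bound from $G^*$ back to $G$, I will use monotonicity of outerplanarity under edge addition in a fixed embedding: deleting the added edges can only merge faces, so if the outer face of $G$ is chosen to contain the outer face of $G^*$, then by induction on the peel number the peels of $G$ dominate the corresponding peels of $G^*$, giving outerplanarity of $G$ at most outerplanarity of $G^*$.

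The main obstacle is justifying the monotonicity step for the second bound. One must maintain across all peels the invariant that the vertex set of the current outer face in the remaining subgraph of $G$ is a superset of that in the remaining subgraph of $G^*$. This rests on the fact that edge deletion can only merge faces and enlarge the outer face, so every vertex that lies on the $G^*$-outer face after $i$ peels also lies on the $G$-outer face after $i$ peels, possibly alongside additional vertices. The degenerate cases $n\le 3$ are handled directly, since the outerplanarity is then at most $n\le 3\le \tfrac{n+26}{6}$.
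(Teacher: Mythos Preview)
Your proof is correct and follows essentially the same approach as the paper's: choose an outer face incident to a minimum-eccentricity vertex for the $1+\rad{G}$ bound, and triangulate then apply Ali et al.\ for the $\tfrac{n+26}{6}$ bound. You are simply more explicit than the paper about the peel-by-peel induction and about the monotonicity of outerplanarity under edge deletion (the paper just asserts ``the outerplanarity of subgraph $G$ cannot be bigger''); one small wording slip is ``each non-outer face of the given spherical embedding,'' since a spherical embedding has no distinguished outer face---you mean to triangulate every face.
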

\begin{proof}
We first prove the radius-bound.  Use as outerface a face that is incident to a vertex $v$ of eccentricity $\rad{G}$.
Then all vertices $z$ with $d_G(v,z)=i-1$ belong to the $i$th peel or an earlier one, so after removing $\rad{G}+1$ peels the graph is empty.

For the second bound, arbitrarily add edges to $G$ to make it into a maximal planar graph $G^+$.   This is triangulated, so using the result by Ali et al.~\cite{DBLP:journals/dm/AliDM12} we have $\rad{G^+}\leq \tfrac{n+20}{6}$ and hence outerplanarity at most $\tfrac{n+26}{6}$.    
The outerplanarity of subgraph $G$ cannot be bigger.
\qed\end{proof}

A \emph{triangulated graph} $G$ is a maximal planar graph; in any planar embedding, faces then have length 3.  It is folklore that for a triangulated graph, 
the difference between radius and outerplanarity is at most~1. 
But for graphs with greater face-lengths, the two parameters become very different (consider a cycle). The radius-bound on 3-connected graphs by Ali et al.~\emph{increases} as the faces get bigger, while one would expect the outerplanarity to \emph{decrease} as faces get bigger.    So our goal in this paper is to find bounds on the outerplanarity that do not depend on the face-lengths and improve on $\tfrac{n}{6}$ for some graphs.  We use a parameter that we call the 
\emph{fence-girth}:   
In a planar graph $G$ with a fixed embedding, a \emph{fence} is a cycle $C$ \changed{with} other vertices \changed{both strictly inside and strictly outside $C$},
and the \emph{fence-girth} is the shortest length of a fence.   \changed{(For a graph without cycles, the fence-girth is $\infty$.)}
Our main result is the following:

\begin{enumerate} 
    \item[] {\bf C1.} 
Every planar graph has outerplanarity at most $\lfloor \tfrac{n-2}{2g} \rfloor +O(g)$ for any integer $g\geq 3$ that is at most the fence-girth.
We can find a planar embedding with this number of peels in linear time.  
Some graphs with \girth $g$ have outerplanarity at least $\lfloor \tfrac{n-2}{2g} \rfloor$.   
(Section~\ref{sec:fencegirth}).
\end{enumerate}


We are not aware of prior results for outerplanarity-bounds, but since the radius is closely related to it for triangulated graphs, we 
contrast our result to the best radius-bound of $\tfrac{n}{2\kappa}+O(\Delta^*)$ by Ali et al.~\cite{DBLP:journals/dm/AliDM12}.     
The fence-girth is never less than the connectivity $\kappa$,
so our theorem implies outerplanarity $\tfrac{n}{2\kappa}+O(1)$\changed{, where $\kappa\leq 5$}. Hence up to small constant terms our bound is never worse than~Ali et al.'s, and often it will be better.   
For example, for bipartite planar graphs the fence-girth is at least 4, so with $g=4$ we obtain a bound of $\tfrac{n}{8}+O(1)$, whereas  Ali et al.'s bound is only $\tfrac{n}{6}+O(1)$. 
Secondly, the prior bound held only for 3-connected planar graphs, while we make no such restrictions.
Finally, we can find a suitable embedding in linear time while   
all existing algorithms for outerplanarity~\cite{DBLP:journals/algorithmica/BienstockM90,DBLP:conf/esa/Kammer07} 
take quadratic time or more.

\medskip

For a triangulated graph $G$, the fence-girth is the same as the connectivity $\kappa\leq 5$.
Result {\bf C1.} hence implies that $\rad{G}\leq \tfrac{n}{2\kappa}+O(1)$.   This bound was previously known \cite{DBLP:journals/dm/AliDM12},
but our result comes with a linear-time algorithm to find a vertex with this eccentricity, which is new:

\begin{enumerate} 
    \item[] {\bf C2.} For a $\kappa$-connected triangulated graph $G$, we can find a vertex $s$ with  $d_G(s,z)\leq \lfloor \tfrac{n-2}{2\kappa} \rfloor +O(1)$ for all $z\in V(G)$ in linear time (Section~\ref{sec:outerplanarity}). 
\end{enumerate}

Ali et al.~did not study the run-time to find a vertex of small eccentricity; while their proof could be turned into an algorithm, its run-time would be $O(n\cdot \rad{G})$, hence quadratic. 
The known subquadratic algorithms for computing the radius of a planar graph are far from being linear~\cite{cabello2018subquadratic,DBLP:journals/siamcomp/GawrychowskiKMS21,wulff2008wiener}, and algorithms that provide $(1+\epsilon)$-approximation have running time of the form $O(f(1/\epsilon) n\log^2 n)$~\cite{DBLP:journals/algorithmica/ChanS19,DBLP:journals/talg/WeimannY16}, where $f$ is a polynomial function on $(1/\epsilon)$. Linear-time algorithms for the radius are only known for special subclasses of planar graph classes~\cite{DBLP:conf/soda/ChepoiDV02,DBLP:journals/jgaa/Eppstein99}. 
\medskip 

Since the outerplanarity (for triangulated graphs) is closely related to the radius, and the radius is closely related to the diameter, it is natural to ask to bound the outerplanarity in terms of the diameter.   We can show the following:

\begin{enumerate}
    \item[] {\bf C3.} Every planar graph $G$ has outerplanarity at most $\tfrac{1}{2}\diam{G}+O(\sqrt{n})$, and a corresponding embedding can be found in linear time. 
Every triangulated graph $G$ has radius $\tfrac{1}{2}\diam{G}+O(\sqrt{n})$, and a vertex of this eccentricity can be found in linear time.   
(Section~\ref{sec:diameter}).
\end{enumerate}

Similar results with a `correction term' of $O(\sqrt{n})$ have been studied before, for example, Boitmanis et al.~\cite{DBLP:conf/wea/BoitmanisFLO06} gave an algorithm that computes the diameter and radius within such an error term in $O(|E(G)|\sqrt{n})$ time.
So our contribution is that
we can find a vertex of eccentricity $\tfrac{1}{2}\diam{G}+O(\sqrt{n})$ in linear time, hence faster than Boitmanis et al.~\cite{DBLP:conf/wea/BoitmanisFLO06}. 

We also show that this bound is tight and that the correction-term $O(\sqrt{n})$ cannot be avoided, not even for triangulated graphs.
In particular, this answers (negatively)
a question on \texttt{MathOverflow} \cite{mathoverflow} whether $\rad{G}\leq \tfrac{1}{2}\diam{G} + O(1)$ for all triangulated graphs;
such a relationship does hold for
interval graphs \cite{Pramanik2011}, chordal graphs \cite{shook2022characterization}, and various grid graphs and generalizations \cite{DBLP:conf/soda/ChepoiDV02}.

\begin{enumerate}
	\item[] {\bf C4.} 
There exists a triangulated graph $G$ with radius $\tfrac{1}{2}\diam{G}+\Omega(\sqrt{n})$ (Section~\ref{sec:diameter}).
\end{enumerate}	

\section{Definitions}
\label{sec:definitions}

We assume familiarity with graph theory and planar graphs (see 
for example \cite{Die12}) and fix throughout a planar graph $G$ with $n$ vertices.
For a path $\pi$ in $G$, the \emph{length} $|\pi|$ is its number of edges.  
For two vertices $y,z$, write $d_G(y,z)$ for the length of the shortest path between them;  \changed{we only need undirected graph distance, i.e.,} if $G$ has directed 
edges then this measures the distance in the underlying undirected graph.
For a set of vertices $L$, write $G\setminus L$ for the graph obtained by deleting the vertices in $L$
and $G[L]:=G\setminus (V\setminus L)$ for the graph induced by $L$.
We need the following \emph{separator theorem for trees}:

\begin{theorem}\cite{LT79} 
\label{thm:LT}
Let $\calT$ be a tree with non-negative node-weights $w(\cdot)$.
Then in linear time we can find a node $S$ such   that
for every subtree $\calT'$ of $\calT\setminus S$ we have $w(\calT')\leq \tfrac{1}{2}w(\calT)$,
where $w(\calT')$
denotes the sum of weights of nodes in $\calT'$.
\end{theorem}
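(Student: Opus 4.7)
The plan is to find the separator via a standard two-pass algorithm on $\calT$ rooted at an arbitrary vertex. In the first pass I root $\calT$ at some node $r$ and, via a post-order traversal, compute for every node $v$ the total weight $w(\calT_v)$ of the subtree rooted at $v$; this clearly takes $O(n)$ time.

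In the second pass I walk from the root toward the desired separator. Starting at $v := r$, as long as $v$ has some child $c$ with $w(\calT_c) > \tfrac{1}{2}w(\calT)$, I advance to such a $c$; otherwise I stop and output $S := v$. Because $w(\calT_v)$ strictly decreases along the walk, the walk terminates; since it follows a single root-to-leaf path and uses only $O(\deg(v))$ work at each visited node $v$, the total cost is linear. Combined with the first pass, the algorithm runs in $O(n)$ time overall.

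For correctness I would check that every subtree of $\calT\setminus S$ has weight at most $\tfrac{1}{2}w(\calT)$. These subtrees come in two kinds. First, for each child $c$ of $S$, the subtree $\calT_c$ is a connected component of $\calT\setminus S$, and by the stopping rule $w(\calT_c)\leq \tfrac{1}{2}w(\calT)$. Second, if $S\neq r$ then there is one more component $\calT'$, consisting of everything reachable from the parent of $S$ in $\calT\setminus S$, whose weight is $w(\calT)-w(\calT_S)$. The main point needing verification is that this upward component also satisfies the bound: because we descended into $S$ from its parent, the descent rule guarantees $w(\calT_S) > \tfrac{1}{2}w(\calT)$, so $w(\calT') < \tfrac{1}{2}w(\calT)$. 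I do not expect any real obstacle; the only case deserving a moment's thought is when the walk never leaves the root (so $S = r$), but then the upward component is empty and only the first kind of subtree needs checking, which holds directly by the stopping condition.
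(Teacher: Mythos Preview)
Your proposal is correct and is essentially the same argument as the paper's (appendix) proof: root $\calT$ arbitrarily and repeatedly descend to a child whose rooted subtree has weight exceeding $\tfrac{1}{2}w(\calT)$, stopping when no such child exists; the child-subtrees then satisfy the bound by the stopping rule, and the upward component has weight $w(\calT)-w(\calT_S)<\tfrac{1}{2}w(\calT)$ since you descended into $S$. The only minor slip is your justification for termination---with non-negative (possibly zero) weights $w(\calT_v)$ need not \emph{strictly} decrease along the walk, but termination holds anyway because depth strictly increases in a finite tree.
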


One easily derived consequence of the separator theorem is the following:

\begin{restatable}{observation}{RadiusUpper}
\label{obs:radiusUpper}
Any connected graph $G$ has a vertex with eccentricity
at most $\lfloor \tfrac{n}{2} \rfloor$ that we can find in linear time.
\end{restatable}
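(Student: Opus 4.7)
The plan is to reduce the problem to finding a good vertex in a spanning tree of $G$ and then invoke Theorem~\ref{thm:LT}. First I would compute an arbitrary spanning tree $T$ of $G$ in linear time (for example via BFS). Then I would assign unit weight to every node of $T$, so that $w(T)=n$, and apply Theorem~\ref{thm:LT} to obtain in linear time a node $S$ such that every subtree of $T\setminus S$ has weight at most $\lfloor n/2 \rfloor$; integrality of the weights lets me replace $n/2$ by its floor.

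Next I would argue that this $S$ already has small eccentricity in $G$. Because $T$ is a spanning subgraph, $d_G(S,z)\leq d_T(S,z)$ for every vertex $z$. For any $z\neq S$ let $T'$ be the component of $T\setminus S$ containing $z$. The unique $S$-to-$z$ path in $T$ begins with one edge from $S$ into $T'$ and continues with a sub-path lying inside the tree $T'$, which has at most $|T'|-1$ edges; the total length is therefore at most $|T'|\leq \lfloor n/2\rfloor$. Combined with $d_G(S,z)\leq d_T(S,z)$, this shows that $S$ has eccentricity at most $\lfloor n/2\rfloor$ in $G$.

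I do not foresee a substantive obstacle. The only piece of bookkeeping worth care is turning the separator's inequality $w(T')\leq w(T)/2$ into the clean bound $\lfloor n/2 \rfloor$ on the number of edges of an $S$-to-$z$ path: the single extra edge from $S$ into $T'$ is exactly compensated by the fact that a tree on $|T'|$ vertices has at most $|T'|-1$ edges. Since both stages — spanning tree construction and the tree separator — run in linear time, the overall procedure is linear, matching the claim.
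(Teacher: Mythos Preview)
Your proposal is correct and follows essentially the same approach as the paper: take a spanning tree, apply Theorem~\ref{thm:LT} with unit weights to get a separator node $S$, and observe that the distance in $T$ (hence in $G$) from $S$ to any vertex $z$ is bounded by the size of the component of $T\setminus S$ containing $z$, which is at most $\lfloor n/2\rfloor$. Your write-up is in fact more explicit than the paper's about why the path length is bounded by $|T'|$ rather than $|T'|+1$, which is a nice touch.
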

\begin{proof}
    Fix a spanning tree $\calT$ of the graph, and let $s$ be the separator-node from Theorem~\ref{thm:LT}, using unit weights.   Then any subtree $\calT'$ of $\calT\setminus \{s\}$ contains at most $\lfloor n/2 \rfloor$ nodes, and so the distance from $s$ to any other node is at most $\lfloor n/2 \rfloor$.
\qed\end{proof}

A {\em spherical embedding} of $G$
describes a drawing $\Gamma$ of $G$ on a sphere $\Sigma$
by listing for each \emph{face} (maximal region of $\Sigma\setminus \Gamma$) the \changed{closed walk}(s)
of $G$ that bound the face.   Graph $G$ is called \emph{triangulated} if all faces are
triangles; the spherical embedding is then unique.  A \emph{planar embedding} of $G$ is 
a drawing of $G$ in the plane described by giving a spherical embedding $\Gamma$ and fixing one face $F$
(the \emph{outerface}) which becomes the infinite face in the planar drawing.

For the following definition, assume that $G$ is \emph{plane} (comes with a fixed planar embedding $(\Gamma,F)$). Define
the {\em peels}~\cite{DBLP:conf/esa/Kammer07} of $G$ as follows:
$L_1$ consists of all vertices on the outerface $F$.   For $i>1$,
$L_i$ consists of all vertices on the outerface of 
$G\setminus (L_1{\cup}\dots{\cup}L_{i-1})$, where this graph uses as
planar embedding the one inherited from $G$.   
The \emph{number of peels} (which depends on $\Gamma$ and $F$) is the minimum number $k$ such that
$G\setminus (L_1{\cup}\dots{\cup}L_{k})$ is the empty graph.
We use the term \emph{fixed-spherical-embedding (fse) outerplanarity} of $G$ for the minimum number of peels
over all choices of outerface $F$ (but keeping the same spherical embedding $\Gamma$).
The (unrestricted) \emph{outerplanarity} of $G$ is the minimum number of peels over
all choices of spherical embedding $\Gamma$ and outerface $F$ of $\Gamma$.


\section{Toolbox}
\label{sec:preliminaries}
\label{sec:toolbox}

In this section, we give some definitions and methods that will be used by multiple proofs later.
Throughout, we assume that the input graph $G$ 
comes with a fixed spherical embedding which we will never change.   
We also assume that $G$ is connected, for if it is not then we can add
edges between components that share a face until $G$ is connected.
This does not add cycles (so does not change the fence-girth) and it can only
decrease the diameter (hence improve the outerplanarity bound),  therefore adding such edges does not affect our results.

\paragraph{The tree of peels:}
We will compute a tree $\calT$ that stores, roughly speaking, the hierarchy of peels for some outerface,
see also Figure~\ref{fig:mesh2}.
Formally, pick a \emph{root-vertex} $\rho$ arbitrarily, except that it should not be a cutvertex. Choose as outerface of $G$ a face incident to $\rho$. 
Define $L_0:=\{\rho\}$ and compute the peels $L_1,\dots,L_k$ of $G\setminus L_0$.
These \emph{layers} $L_0,L_1,\dots,L_k$ are not quite the peels of $G$ (because we start with one vertex rather than a face), and not quite the layers of a breadth-first search (BFS) tree (because we include in the next layer all vertices that share a face with vertices of the previous layer, whether they are adjacent or not).
We direct each edge $(y,z)$ 
from the higher-indexed to the lower-indexed layer; edges connecting vertices within a layer remain undirected.

We organize the layers $L_0,\dots,L_k$ into a tree $\calT$ (the \emph{tree of peels}) as follows:
The root of $\calT$ is a node $R$ that corresponds to the entire graph $G$; define $V(R):=\{\rho\}$.
For $i=1,2,3,\dots$, add a node $N_K$ to $\calT$ for each connected
	component $K$ of $G\setminus (L_0{\cup}\dots{\cup} L_{i-1})$.
	Component $K$ is part of one connected component $P$ of
	$G\setminus (L_0{\cup}\dots{\cup} L_{i-2})$; make $N_K$ a child
	of the node $N_P$ corresponding to $P$ in $\calT$.   Define $V(N_K)$ to be the outerface vertices of $K$.

\begin{figure}[ht]
\subcaptionbox{~}{{\includegraphics[scale=0.65,trim=0 0 407 0,clip]{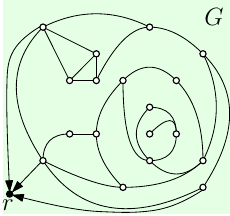}}}%
\hspace*{\fill}%
\subcaptionbox{~}{{\includegraphics[scale=0.65,trim=130 0 278 0,clip]{pictures/layertree.pdf}}}%
\hspace*{\fill}%
\subcaptionbox{~}{{\includegraphics[scale=0.7,trim=265 0 180 0,clip]{pictures/layertree.pdf}}}%
\hspace*{\fill}%
\subcaptionbox{~}{{\includegraphics[scale=0.7,trim=362 0 128 0,clip]{pictures/layertree.pdf}}}%
\hspace*{\fill}%
\subcaptionbox{~}{{\includegraphics[scale=0.7,trim=402 0 90 0,clip]{pictures/layertree.pdf}}}%
\hspace*{-3mm}
\subcaptionbox{~}{{\includegraphics[scale=0.65,trim=445 0 2 0,clip]{pictures/layertree.pdf}}}%
\hspace*{\fill}%
\subcaptionbox{~\label{fig:gplus}}{\includegraphics[scale=0.5,trim=0 175 0 0,clip]{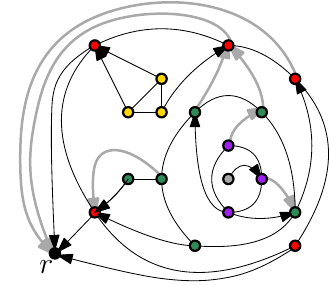}}
    \caption{(a) A plane graph $G$. (b)--(e) The graphs obtained by deleting  $L_0,L_1,\ldots, L_3$, respectively. Solid vertices are the set $V(N)$ of the corresponding node $N$. 
    (f) The tree of peels $\cal T$. (g) The augmentation $H$.
	}
    \label{fig:mesh2}
\end{figure}

Throughout this paper, we will use `node' (and upper-case letters) for the elements of $\calT$ while we reserve `vertex' (and lower-case letters) for $G$.
An \emph{interior node} of $\calT$ is a node that is neither the root nor a leaf.
We think of each node $N$ as `storing' the vertices in $V(N)$ and observe that these vertices induce a connected subgraph.
Also, every vertex of $G$ is stored at exactly one node of $\calT$.
We need a few easy observations: 

\begin{restatable}{observation}{NodeSize}
\label{obs:nodeSize}
\label{obs:Tproperty}
The following holds for the tree of peels $\calT$:
\begin{enumerate}
\item The root $R$ of $\calT$ has a single child.
\item
\label{it:Tproperty}
Let $Y,Z$ be the nodes that store the ends $y,z$ of an edge $e$. 
Then either $e$ is undirected and $Y{=}Z$, or $e$ is directed (say $y\rightarrow z$) and $Z$ is the parent of $Y$.
\item 
\label{it:nodeSize} For any interior node $N$ of $\calT$, the size $|V(N)|$ is at least the \girth.
\end{enumerate}
\end{restatable}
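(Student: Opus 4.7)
The first two claims are essentially bookkeeping from the definition of $\calT$. For Part 1, the children of the root $R$ correspond one-to-one with the components of $G\setminus L_0 = G\setminus\{\rho\}$; since $\rho$ was chosen not to be a cutvertex of the connected graph $G$, there is exactly one such component, so $R$ has one child. For Part 2, let $y\in L_a$ and $z\in L_b$ with $a\ge b$ be the endpoints of $e$. If $a=b$, then $e$ is undirected, and the adjacency of $y,z$ forces them into the same component of $G\setminus(L_0\cup\cdots\cup L_{a-1})$, giving $Y=Z$. If instead $a>b$ (so $e$ is directed $y\to z$), I would first show $b=a-1$: at step $b$ the vertex $z$ sits on the outerface of $G\setminus(L_0\cup\cdots\cup L_{b-1})$, so deleting $L_b$ merges the faces previously incident to $z$ into the new outerface; since $y$ is still present (as $a>b$) and is a neighbor of $z$, vertex $y$ lands on the outerface of $G\setminus(L_0\cup\cdots\cup L_b)$, and together with $y\in L_a$ this forces $a=b+1$. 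Then, because $y\in K$ and $z\in L_{a-1}$ are adjacent, $z$ lies in the component $P$ of $G\setminus(L_0\cup\cdots\cup L_{a-2})$ containing $K$, so $z\in V(N_P)=P\cap L_{a-1}$, and $N_P$ is by construction the parent of $Y=N_K$.

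Part 3 is the substantive claim. Let $N$ be an interior node at depth $i\ge 1$ corresponding to a component $P$ of $G\setminus(L_0\cup\cdots\cup L_{i-1})$. Since $N$ has a child, $P$ contains a vertex $v\notin L_i$, an \emph{interior} vertex of $P$ in its inherited planar embedding. My strategy is to produce a simple cycle $C$ of $G$ with $V(C)\subseteq V(N)$ such that $v$ lies strictly inside $C$ and $\rho$ lies strictly outside in $G$'s embedding; once such a $C$ exists, $C$ is a fence and $|V(N)|\ge |V(C)|\ge \girth$.

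To build $C$, I first note that $v$ is incident in $P$ to at least one non-bridge edge: both endpoints of a bridge always appear on the outerface walk of a connected planar graph, so if every edge at $v$ were a bridge, then $v$ would itself be an outerface vertex of $P$, contradicting $v\notin L_i$. Hence $v$ lies in some 2-connected block $B$ of $P$. In the planar embedding of $B$ inherited from $P$, let $F_B^{\mathrm{out}}$ be the unique face of $B$ that contains the outerface of $P$, and let $C$ be the boundary of $F_B^{\mathrm{out}}$---a simple cycle because $B$ is 2-connected. Each edge of $C$ has $F_B^{\mathrm{out}}$ on one side, so the outerface walk of $P$ traverses it (possibly detouring through other blocks of $P$ attached at cutvertices on $C$), which gives $V(C)\subseteq V(N)$. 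Since $v\notin V(N)$, $v\notin C$; as $v\in B$, this places $v$ on the side of $C$ opposite $F_B^{\mathrm{out}}$. Meanwhile, $\rho$ lies in the outerface of $G$, which is contained in the outerface of $P$, which is contained in $F_B^{\mathrm{out}}$; thus $\rho$ lies on the $F_B^{\mathrm{out}}$-side of $C$. Therefore $C$ separates $v$ from $\rho$ and is a fence.

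The main technical obstacle is the block-tree analysis in Part 3: one must verify both that $v$ really sits in a 2-connected block (ruling out the pathological case where all its incident edges are bridges) and that the chosen cycle $C$ of that block remains part of the outerface walk of $P$ and genuinely surrounds $v$, even when other blocks of $P$ are attached to $C$ at cutvertices on the $F_B^{\mathrm{out}}$-side. Once those planarity details are in place, the fence-girth bound falls out immediately from the definitions.
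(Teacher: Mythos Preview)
Your Parts 1 and 2 are correct and match the paper's argument (you give more detail on why endpoints of an edge lie in adjacent layers, which the paper takes as evident).

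For Part 3, your overall strategy coincides with the paper's: exhibit a cycle on the outerface of the component $P$ that has $v$ strictly inside and $\rho$ strictly outside. The paper simply asserts that such a cycle exists on the outerface boundary of $K_N$; you try to justify this via the block decomposition, which is a sensible way to flesh it out. However, your key step is wrong. The claim ``both endpoints of a bridge always appear on the outerface walk of a connected planar graph'' is false: take a triangle with a single pendant edge drawn \emph{inside} it. The pendant edge is a bridge, yet its leaf endpoint is not on the outerface. So an interior vertex $v$ of $P$ can have all incident edges be bridges (e.g.\ $v$ could itself be a pendant hanging inside a cycle of $P$), and then $v$ lies in no $2$-connected block at all. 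Your construction of $B$ and $C$ collapses in this case.

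The fix is not to look at the block containing $v$, but at a block that \emph{encloses} $v$. Since $v$ is not on the outerface of $P$, the outerface boundary walk of the connected plane graph $P$ bounds a closed region containing $v$; decomposing this walk yields a collection of simple cycles (the outer boundaries of those $2$-connected blocks of $P$ that touch the outerface) together with bridge segments, and the interior region is the union of the disks these cycles bound. Hence $v$ lies inside one such cycle $C$, and $V(C)\subseteq V(N)$ since $C$ is on the outerface of $P$. Equivalently, in the block-tree of $P$, walk from (the block or cutvertex containing) $v$ toward any outerface vertex and take the first non-trivial block you meet; its outer boundary is the desired $C$. With this correction your argument goes through.
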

\begin{proof}
(1) We chose root-vertex $\rho$ so that it is not a cutvertex; therefore $G\setminus L_0=G\setminus \rho$
is connected and there is only one node that is a child of $R$.

(2) For edge $(y,z)$, let $i$ be the smallest index for which layer $L_i$ contains $y$ or $z$. If both $y,z$ are in $L_i$, then $y,z$ belong to the same node since they are in one connected component.   If one of them (say $y$) is not in $L_i$, then $y\in L_{i+1}$, so the edge is directed $y\rightarrow z$ and $Y$ becomes a child of $Z$ since the edge $(y,z)$ ensures that $y$ is in the connected component that defined $Z$.


(3) Recall that node $N$ corresponds to a connected component $K_N$ of the graph obtained by deleting some of the levels. Since $N$ is not a leaf, subgraph $K_N$ has at least one vertex $v$ not on the outerface.  
Therefore, the outerface of $K_N$ contains a cycle that has $v$ inside and $\rho$ outside (since $N\neq R$).
This cycle is a fence and all its vertices belong to $V(N)$.
\qed\end{proof}

\paragraph{Augmenting $G$:}  It will be helpful if every vertex except root-vertex $\rho$ has an outgoing edge.
In general, this need not hold for our input graph $G$.   We therefore augment $G$ with further edges.
The following result was shown in \cite{Bie15}; the result there was for
the peels while our definition of layers $L_0,\dots,L_k$ is slightly different,
but one easily verifies that the proof carries over.

\begin{numclaim}(based on Obs.~2 in \cite{Bie15})
\label{claim:parents}
We can add edges to $G$ (while maintaining planarity) such that for all $i \geq 1$ every vertex in $L_i$ has a neighbour in $L_{i-1}$.
\end{numclaim}

Let $H$ be the graph obtained by adding a set of directed edges 
such every vertex except $\rho$ has an outgoing edge in $H$ (Figure~\ref{fig:gplus}).
Because we only add edges between adjacent layers, the following is easily shown.

\begin{restatable}{observation}{AugmentedT}
\label{obs:augmentedT}
The augmented graph $H$ has the same tree of peels as $G$ (assuming we start with the inherited planar embedding and outerface and use the same root-vertex). 
\end{restatable}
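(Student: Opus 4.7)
The plan is to prove by induction on $i \geq 0$ that the layers $L_0, L_1, \ldots, L_i$ agree in $G$ and $H$ and that the connected components of $G\setminus(L_0 \cup \dots \cup L_i)$ and $H\setminus(L_0 \cup \dots \cup L_i)$ are identical as vertex sets. Since the tree of peels is determined by these data (one node per component, $V(N)$ being the outerface vertices of the corresponding component, children inherited from deeper levels), this gives that the two trees are identical. The base case $i=0$ is immediate: $L_0=\{\rho\}$ in both graphs by construction, and $G\setminus\{\rho\}$ is connected because $\rho$ is not a cutvertex, so $H\setminus\{\rho\}\supseteq G\setminus\{\rho\}$ is connected as well. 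For the inductive step I would split the argument into (i) matching of $L_{i+1}$ and (ii) matching of components after additionally removing $L_{i+1}$, using throughout that every added edge of $H$ joins $L_{j-1}$ to $L_j$ for some $j\geq 1$ and, being added while preserving planarity, must be drawn inside a face of $G$ whose boundary contains both its endpoints.

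For (i), fix a component $K$ common to $G\setminus(L_0 \cup \dots \cup L_i)$ and $H\setminus(L_0 \cup \dots \cup L_i)$ by the IH; its outerface vertices in $G$ are exactly $L_{i+1}\cap V(K)$. An added edge with both endpoints in $V(K)$ necessarily joins $L_{j-1}$ to $L_j$ with $j\geq i+2$, so at least one endpoint lies in the strict interior of $K$. Such an edge must be drawn in a face of $G$ whose boundary contains this interior vertex, and such a face is necessarily an interior face of $K$ (the outerface boundary of $K$ consists solely of vertices in $L_{i+1}$). Adding an edge inside an interior face does not change the set of outerface vertices, so $L_{i+1}\cap V(K)$ is the same in $H$ as in $G$, and summing over components gives $L_{i+1}^G=L_{i+1}^H$.

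For (ii), since $E(G)\subseteq E(H)$, the $H$-components at level $i+1$ are unions of $G$-components, so I only need to rule out that a surviving added edge of $H$ merges two distinct $G$-components. Such an edge joins $v\in L_j$ to $u\in L_{j-1}$ with $j-1\geq i+2$, so $j\geq i+3$. Let $F_0$ be the face of $G$ in which it was drawn, and let $K''$ be the $G$-component of $G\setminus(L_0 \cup \dots \cup L_{j-2})$ containing $u$. If $F_0$ were incident to some vertex of $L_0 \cup \dots \cup L_{j-2}$, then $F_0$ would be merged into the outerface of $K''$, and $v$ lying on $F_0$ would force $v$ onto the outerface of $K''$, i.e., $v\in L_{j-1}$, contradicting $v\in L_j$. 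Hence $F_0$ sits entirely inside $K''$, so $v\in V(K'')$; since $i+1\leq j-2$, it follows that $u$ and $v$ already lie in the same $G$-component at level $i+1$, so the added edge does not merge components.

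The main obstacle is part (ii): carefully translating the planarity of an added edge into a statement about how deep in the peel hierarchy its two endpoints can be placed together, which forces a case analysis of whether the shared face $F_0$ touches the already-deleted layers. Once that is handled, layers, components, and outerface vertex sets all match at each level of the induction, so the set of nodes of $\calT$ and their parent-child relations are identical for $G$ and $H$, which is exactly what the observation asserts.
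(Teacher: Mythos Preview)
Your argument is correct. The organization differs from the paper's: you induct on the layer index $i$ and separately verify (i) that the next layer $L_{i+1}$ matches and (ii) that the components at level $i{+}1$ match, each time locating the face $F_0$ of $G$ containing an added edge relative to the current components. The paper instead inducts on the number of added edges and, for a single added edge $(y,z)$ drawn inside a face $F$, invokes directly the fact that the boundary of $F$ lies in two consecutive layers $L_i\cup L_{i+1}$; since that (connected) boundary then furnishes a $G$-path $\pi$ from $y$ to $z$ using only vertices of those layers, the edge $(y,z)$ contributes no new connectivity while $L_i$ is still present and disappears once $L_i$ is removed. This path-$\pi$ observation collapses your two sub-arguments into one line. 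On the other hand, your treatment is more explicit about why the layers themselves coincide in $G$ and $H$ --- the paper asserts this at the outset with only the phrase ``since we only add directed edges'' as justification, whereas your part~(i) actually argues it. One small wording issue in your part~(ii): you write that $F_0$ would merge into the outerface of $K''$ (the component of $u$), but a priori $v$ could lie in a different component; the contradiction still goes through since $v$ is then on the outerface of \emph{its} component and hence in $L_{j-1}$, but the reference to $K''$ specifically is premature at that point.
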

\begin{proof}
Observe first that since we start with the same root-vertex and outerface, and only add directed edges, 
\tb{FYI: We used to assume that $H$ is minimal, but I dont want to waste effort on discussing how to find this efficiently.   So downgraded this to `add only directed edges', which is easy to achieve.}
the layers $L_0,\dots,L_k$ are exactly the same in both $G$ and $H$.  Assume $H$ is obtained by adding just one edge $e=(y,z)$ (the full proof is then by induction on the number of added edges).   
Observe that $y,z$ belong to one face $F$ of $G$ since we can add edge $(y,z)$ to $G$ while staying planar.   
Since $G$ is connected, so is the boundary of $F$.   For any face the vertices belong to at most two consecutive layers by definition of peels; 
for the specific face $F$ the vertices belong to exactly two consecutive layers (say $L_i$ and $L_{i+1}$) since they include $y,z$ which are not on the same 
layer by construction of $H$.   So there exists in $G$ a path $\pi$ (along the boundary of $F$) that connects $y$ and $z$ and has all vertices in $L_i$ and $L_{i+1}$.

With this, the connected components that define the tree of peels are exactly the same for both graphs.   To see this, observe that when we have removed $L_0\cup \dots \cup L_{h-1}$ for some $h\leq i$, then edge $(y,z)$ provides no connectivity among components that we did not have via path $\pi$ instead.   Once we have removed $L_i$, one of $y,z$ (and hence the added edge) has been removed from the graph, and so again does not add any connectivity.
\qed\end{proof}

\begin{figure}[ht]
    \centering
    \includegraphics[width=.8\textwidth]{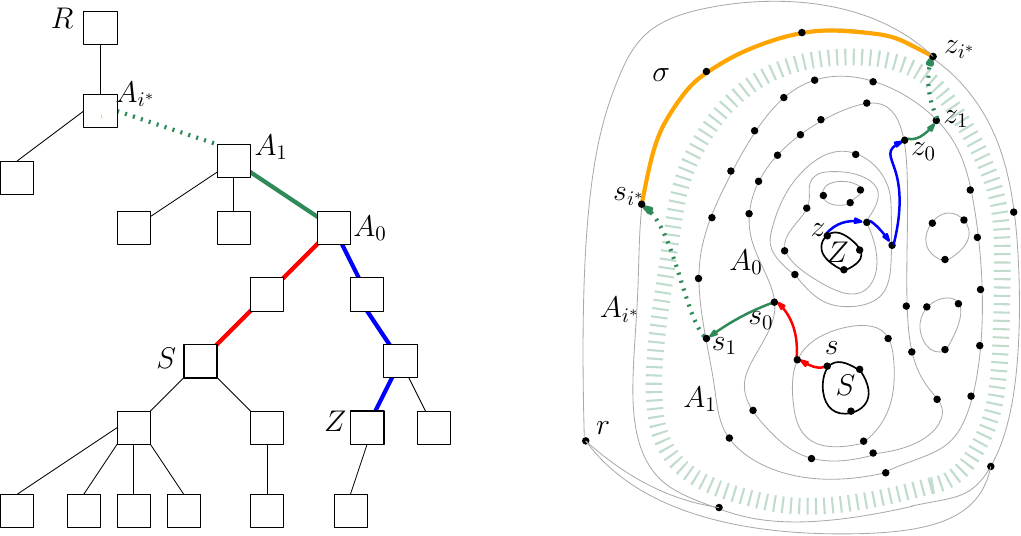}
    \caption{Larger example of $\cal T$ and the detour-method to connect $s_0\in A_0$ with $z_0\in A_0$, as well as $s$ to $z$ as in 
Claim~\ref{cl:caseRoot}. Not all directed edges are shown for clarity.}
    \label{fig:pathfinding}
    \label{fig:detourMethod}
\end{figure}

Adding edges to $G$ 
may decrease the \girth, but not the node-sizes of
$\calT$,  which is all that will be used below.
We will in the following \emph{only} consider graph $H$, so 
every vertex $z\neq \rho$ has an outgoing edge.   

\paragraph{The detour-method:} 
We need the following method to connect two given vertices of $H$ that are stored in the same node of $\calT$
(Figure~\ref{fig:method1} and Figure~\ref{fig:detourMethod}).

\begin{definition}
Fix a node $A_0$ of 
$\calT$, and two vertices $s_0,z_0\in V(A_0)$, as well
as \emph{exit conditions} $\xi_0,\xi_1,\xi_2,\dots$ which are (possibly negative) integers
\changed{that will be specified by each application}.  The \emph{detour method at node $A_0$}
finds a path connecting $s_0$ and $z_0$ as follows:
\begin{itemize}
\item Initialize $i=0$; we have $s_i,z_i\in V(A_i)$.  We will
	also maintain paths $\tau_s$ (from $s_0$ to $s_i$) and $\tau_z$ (from $z_0$ to $z_i$);
	initially these are simply $\langle s_0\rangle$ and $\langle z_0\rangle$.
\item If $d_H(s_i,z_i)\leq \xi_i$, then set $\sigma$ to be a path from $s_i$ to $z_i$ that has distance at most $\xi_i$. Exit
	with `success' and return $i$ and $\tau_s,\tau_z,\sigma$.
\item If $d_H(s_i,z_i)> \xi_i$, and $A_i$ was the root, then return `fail'.
\item  Otherwise let $A_{i+1}$ be the parent of $A_i$.
	Find directed edges $s_i\rightarrow s_{i+1}$ and
	$z_i\rightarrow z_{i+1}$, append them to paths $\tau_s$ and $\tau_z$,  update $i\gets i{+}1$ and repeat.
\end{itemize}
\end{definition}

Since each iteration gets us closer to the root, the algorithm must terminate.
If it exits successfully, say at index $i^*$, then we get a walk
$s_0 {\overset{\tau_s}{\:\text\squiggleLR\!}} s_{i^*} {\overset{\sigma}{\:\text\squigglea\!
}} z_{i^*} {\overset{\tau_z}{\leftsquigarrow}} z_0.$
To bound the length of this walk, the following observation 
will be useful:
\begin{restatable}{observation}{PathWithinNode}
\label{obs:pathWithinNode}
If the detour-method does not succeed at index $i>0$, and $A_i$ is not the root, then
$\lfloor \tfrac{1}{2}|V(A_i)| \rfloor \geq \xi_i{+}1$.
\end{restatable}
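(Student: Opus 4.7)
The plan is to reduce the inequality to a distance bound on a short simple cycle inside $K_{A_i}$. Specifically, I will exhibit a simple cycle $C$ in $K_{A_i}$ with $V(C) \subseteq V(A_i)$ and with $s_i, z_i \in V(C)$; since $|V(C)| \leq |V(A_i)|$, walking the shorter arc of $C$ yields $d_H(s_i, z_i) \leq \lfloor \tfrac{1}{2}|C| \rfloor \leq \lfloor \tfrac{1}{2}|V(A_i)| \rfloor$. Combined with the failure assumption $d_H(s_i, z_i) > \xi_i$, this gives $\lfloor \tfrac{1}{2}|V(A_i)| \rfloor \geq \xi_i + 1$, which is exactly the conclusion.

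To build $C$, observe that because $i > 0$ the detour-method passed through a child $A_{i-1}$ of $A_i$ before reaching $A_i$, so $V(A_{i-1}) \neq \emptyset$; any $v \in V(A_{i-1})$ is an interior vertex of $K_{A_i}$. The argument already used in Observation~\ref{obs:Tproperty}(3) then produces a simple cycle $C$ on the outer-face walk of $K_{A_i}$, with $V(C) \subseteq V(A_i)$, enclosing $v$ in its bounded disc and $\rho$ on the outside. Since $K_{A_{i-1}}$ is connected and vertex-disjoint from $V(A_i) \supseteq V(C)$, no edge of $K_{A_{i-1}}$ can cross $C$ in the planar embedding, and so all of $K_{A_{i-1}}$ sits strictly inside the disc bounded by $C$.

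Next I will confirm that $s_i, z_i \in V(C)$. By the definition of the detour step, $H$ contains directed edges $s_{i-1}s_i$ and $z_{i-1}z_i$ with $s_{i-1}, z_{i-1} \in V(A_{i-1})$ (hence strictly inside the disc bounded by $C$) and $s_i, z_i \in V(A_i)$. No vertex of $V(A_i)$ can lie strictly inside the disc: such a vertex would have to be on the outer face of $K_{A_i}$, yet the cycle $C$ (which is part of $K_{A_i}$) would separate it from that outer face. Hence $s_i$ lies either on $C$ itself or strictly outside the disc; in the latter case the edge $s_{i-1}s_i$ would have to cross $C$, violating planarity. The same reasoning applies to $z_i$, and stringing the three inequalities together gives the observation.

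The step I expect to require the most care is the case where $K_{A_i}$ is not $2$-connected and its outer-face walk is not a simple cycle. In that setting one must extract from the walk a simple sub-cycle that actually encloses the particular child component $K_{A_{i-1}}$ (not merely some sibling), which is essentially the same ``fence'' extraction used for Observation~\ref{obs:Tproperty}(3) combined with the connectivity argument above, and should carry over with only a little bookkeeping.
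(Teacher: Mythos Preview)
Your proof is correct and follows essentially the same approach as the paper's: both construct a simple cycle $C$ with $V(C)\subseteq V(A_i)$ that encloses the child component $K_{A_{i-1}}$, verify that $s_i,z_i\in V(C)$ via the incoming edges from $s_{i-1},z_{i-1}$, and then take the shorter arc. The only cosmetic difference is the vantage point for extracting $C$ --- you pull it from the outer-face walk of $K_{A_i}$, while the paper pulls it from the boundary of the inner face of $H[V(A_i)]$ that contains $V(A_{i-1})$ --- but these yield the same cycle and the non-2-connected caveat you flag is exactly the one the paper glosses over as well.
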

\tb{proof kicked to appendix}
\begin{proof}
Recall that there are directed edges $s_{j-1}\rightarrow s_j$ and $z_{j-1}\rightarrow z_j$ by $j>0$ and that $s_{j-1},z_{j-1}\in A_{j-1}$.
Write $G_j$ for the graph induced by $V(A_j)$, and observe that $V(A_{j-1})$ must all belong to one inner face $F$ of $G_j$ since it bounds
a connected component of the subgraph of $G$ where the peels up to $A_j$ have been removed.   Furthermore, $s_j,z_j$ have neighbours both strictly inside
and strictly outside $F$ (due to their outgoing edges).    So there must be a simple cycle $C$ along the boundary of $F$ that contains both $s_j$ and $z_j$.
Walking along the shorter side of $C$ hence gives a walk from $s_j$ to $z_j$ of length at most $\lfloor |C|/2 \rfloor \leq \lfloor |V(A_j)|/2 \rfloor$.
By $j<i^*$ the stopping-condition did not hold at $A_j$, so this implies $\lfloor |V(A_j)/ \rfloor > \xi_i$ and the result holds by integrality. 
\qed\end{proof}

We demonstrate how to use the detour-method with the following result that will be needed later.
For any node $N$, write $\myalpha(N)$ for the number of vertices stored at strict ancestors of $N$.

\begin{restatable}{lemma}{ConnectS}
\label{lem:distanceS}
\label{lem:connectS}
Assume that $|V(N)|\geq 3$ for all interior nodes $N$. 
For any node $A_0$ and any $s_0,t_0\in V(A_0)$ we have
$d_H(s_0,t_0)\leq  \max\{\allowbreak 2\lceil \sqrt{\myalpha(A_0)}\rceil {-}2, \allowbreak 4\}$.
\end{restatable}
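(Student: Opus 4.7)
The plan is to invoke the detour-method of Section~\ref{sec:toolbox} at $A_0$, identifying its two input vertices with $s_0$ and $t_0$ (the latter playing the role of $z_0$ in the method's definition). Set
$$r\;:=\;\max\bigl\{\lceil\sqrt{\myalpha(A_0)}\rceil-1,\;2\bigr\}$$
and use the exit conditions $\xi_i:=2(r-i)$ for $i=0,1,2,\ldots.$ The constants are chosen so that $2r$ equals the target bound $\max\{2\lceil\sqrt{\myalpha(A_0)}\rceil-2,\,4\}$ exactly. If the detour exits successfully at some index $i^*\le r$, the walk it returns has length at most $2i^*+\xi_{i^*}=2r$, which bounds $d_H(s_0,t_0)$ and proves the lemma.

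Everything therefore reduces to showing that the detour does \emph{not} return \emph{fail}. Suppose for contradiction it does. Then it reaches the root $R=A_K$ of $\calT$ (at depth $K$ above $A_0$) without ever exiting, and $d_H(s_K,t_K)>\xi_K$. By Observation~\ref{obs:nodeSize}(1), $V(R)=\{\rho\}$, so $s_K=t_K=\rho$ and $d_H(s_K,t_K)=0$; this forces $\xi_K<0$, i.e., $K>r$. Hence for every $i\in\{1,\dots,r\}$ the node $A_i$ is a strict ancestor of $A_0$ that is neither a leaf nor the root, so it is interior, and Observation~\ref{obs:pathWithinNode} gives $|V(A_i)|\ge 2(\xi_i+1)=4(r-i)+2$. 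At $i=r$ this yields only $|V(A_r)|\ge 2$; here we sharpen to $|V(A_r)|\ge 3$ using the lemma's standing hypothesis. Summing over $i=1,\dots,r$ then gives
$$\sum_{i=1}^{r}|V(A_i)|\;\ge\;3+\sum_{i=1}^{r-1}\!\bigl(4(r-i)+2\bigr)\;=\;2r^{2}+1.$$
On the other hand, since $K-1\ge r$,
$$\sum_{i=1}^{r}|V(A_i)|\;\le\;\sum_{i=1}^{K-1}|V(A_i)|\;=\;\myalpha(A_0)-|V(R)|\;=\;\myalpha(A_0)-1,$$
so $\myalpha(A_0)\ge 2r^2+2$. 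A short case split on the definition of $r$ now yields a contradiction: if $r=2$ then the $\max$-clause forces $\myalpha(A_0)\le 9<10=2r^2+2$; if $r\ge 3$ then $r=\lceil\sqrt{\myalpha(A_0)}\rceil-1$ gives $\myalpha(A_0)\le(r+1)^2=r^2+2r+1\le 2r^2<2r^2+2$, using $r^2\ge 2r+1$ for $r\ge 3$.

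The delicate point I expect is calibrating the exit conditions so that $2r$ hits exactly $2\lceil\sqrt{\myalpha(A_0)}\rceil-2$ rather than something $\Theta(\sqrt{\myalpha(A_0)})$ with a looser constant. That is precisely what forces the essential use of the $|V(N)|\ge 3$ hypothesis (to recover an extra $+1$ in the sum at index~$r$, pushing the bound from $2r^2$ to $2r^2+1$) and also what necessitates the $\max\{\cdot,4\}$ clause in the statement (to absorb the regime of very small $\myalpha(A_0)$ where the bare $\sqrt{\cdot}$ formula is too tight for the counting argument to close).
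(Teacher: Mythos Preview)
Your proof is correct and follows essentially the same strategy as the paper: apply the detour-method with linearly decreasing exit conditions, and derive a contradiction by summing node-sizes along the ancestor chain if the method fails. The paper uses the parameter $\beta=\lceil\sqrt{\myalpha(A_0)}\rceil-3$ and handles the case where $A_0$ is within distance~2 of the root separately (giving the bound~4 directly), whereas you absorb that case into the definition $r=\max\{\lceil\sqrt{\myalpha(A_0)}\rceil-1,2\}$ and treat it uniformly in the $r=2$ branch of the contradiction; since $r=\beta+2$, your exit conditions $\xi_i=2(r-i)$ coincide with the paper's $\xi_i=2\beta+4-2i$, and both arguments invoke the $|V(N)|\ge 3$ hypothesis at exactly the same node $A_r=A_{\beta+2}$. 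Your bookkeeping is marginally cleaner, but the two proofs are the same in substance.
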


\begin{proof} 
We are done if $A_0$ is the root $R$ or its child or a grandchild of $R$, for then we can connect $s_0$
and $t_0$ with a path of length at most 4 by following directed edges until we reach root-vertex $r$.
So assume that $A_0$ has a parent $P$, grand-parent $G_1$ and great-grandparent $G_2$, 
therefore $\myalpha(A_0)\geq |V(P)|+|V(G_1)|+|V(G_2)| \geq 3+3+1=7$ since $P$ and $G_1$ are internal nodes.
Hence $\lceil \sqrt{\myalpha(A_0)} \rceil \geq 3$ and
the desired upper bound is $2\lceil \sqrt{\myalpha(A_0)} \rceil -2 \geq 4$.
For ease of writing define $\beta=\lceil \sqrt{\myalpha(A_0)} \rceil -3\geq 0$,
so the desired upper bound becomes $2\beta+4$.
Apply the detour-method at $A_0$, using 
$\xi_i= 2\beta+4-2i$ for $i\geq 0$.
If the method returns successfully at index $i^*$ with paths $\tau_s,\tau_z,\sigma$, then 
combining the three paths gives $d(s_0,t_0)\leq |\tau_s|+|\tau_z|+|\sigma| \leq 2i^* + 2\beta+4-2i^*$ as desired.

So now assume for contradiction 
that the detour-method fails, and let $A_i,s_i,z_i$ (for $i=0,\dots,\ell$) be the nodes and vertices that it used;  we have $A_\ell=R$ since the method 
can only fail at the root.
Since the detour-method did not succeed at $R$,
\tb{FYI: This needed a major rewrite; Obs.\ref{obs:pathWithinNode} doesn't hold for the root note.} 
we did not have a path of length at most $\xi_\ell$ from $s_\ell$ to $z_\ell$.   But $|R|=1$, so $s_\ell=r=z_\ell$ are connected by a path of length 0.    So $0>\xi_\ell=2\beta+4-2\ell$ 
or $\ell> \beta+2$.
Now bound the sizes of $V(A_1),\dots,V(A_\ell)$ as follows:
\begin{itemize}
\item For $i=1,\dots,\beta+2$, Observation~\ref{obs:pathWithinNode} implies 
	$|V(A_i)|\geq 2(\xi_i+1)=4\beta+10-4i > 3\beta+8-4i$.   
\item For $i=\beta{+}1$, this bound becomes $|V(A_i)|\geq 4\beta+10-4(\beta{+}1)=6$.
\item For $i=\beta{+}2$, we also know $|V(A_i)|\geq 3$ since $A_i$ is not the root by $\ell>\beta+2$.
\item Root $A_\ell$ stores one vertex $\rho$.
\end{itemize}
We therefore have a contradiction:
\begin{align*}
(\beta+3)^2 & \geq \myalpha(A_0) \geq \sum_{i=1}^\ell |V(A_i)|   
\geq \sum_{i=1}^{\beta} \big(3\beta{+}8{-}4i\big) + 6 + 3 + 1 \\
& > \sum_{i=1}^{\beta} \big(3\beta{+}6\big) + \sum_{i=1}^{\beta} \big(2{-}4i\big) + 9 \\
& = 3\beta^2+6\beta - 2\sum_{i=1}^{\beta} (2i{-}1) + 9
 = 3\beta^2+6\beta - 2\beta^2 +9  = (\beta+3)^2  
\end{align*}
\qed\end{proof}

\begin{figure}[t]
\subcaptionbox{~\label{fig:method1}}{\includegraphics[scale=0.37,page=1,trim=0 256 0 0,clip]{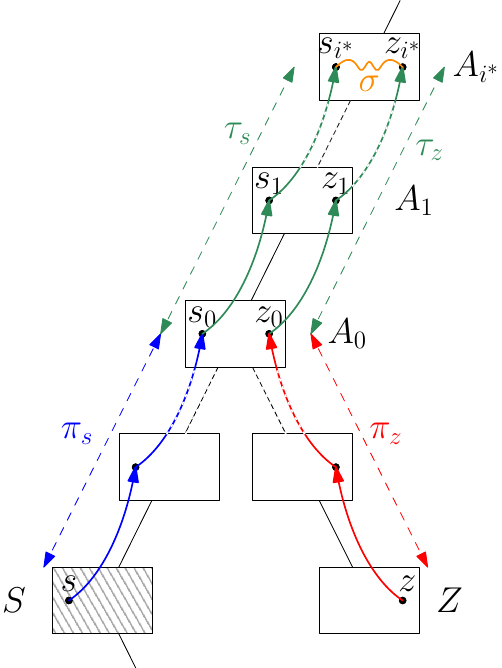}}
\hspace*{-5mm}
\subcaptionbox{~\label{fig:method2}}{\includegraphics[scale=0.37,page=2,trim=0 128 60 128,clip]{pictures/methods.pdf}}
\subcaptionbox{~\label{fig:method3}}{\includegraphics[scale=0.37,page=3,trim=0 0 0 256,clip]{pictures/methods.pdf}}
\subcaptionbox{~\label{fig:leastCommonAncestor}}{\includegraphics[scale=0.37,page=4,trim=0 0 0 256,clip]{pictures/methods.pdf}}
\caption{(a) Path-finding for nodes in $V(\calT_R)$ \changed{when $Z$ is not a descendent of $S$}. (b) Path-finding by using a detour at $S$ \changed{when $Z$ is a descendent of $S$}.
(c) Path-finding for deep descendants of $S$. (d) Finding a small common ancestor of deep nodes.
\todo[inline]{The green is barely visible in a B/W printout.}
	}
\label{fig:threeMethods}
\end{figure}

\section{Outerplanarity and \girth}
\label{sec:outerplanarity}
\label{sec:fencegirth}

In this section, we first prove that any planar graph $G$ has outerplanarity at most $\lfloor \tfrac{n-2}{2g}\rfloor +O(g)$
where $g\geq 3$ is a (user-given) integer that is \changed{supposed to be} at most the \girth.%
\footnote{%
We use a parameter $g$ that is separate from the \girth since the latter can be $\Theta(n)$; to minimize the upper bound one should set $g$ to be $\min\{\girth,\tfrac{1}{2}\sqrt{n{-}2}\}$.}
Then we discuss implications 
and lower bounds.

\paragraph{Separator-node $S$:}
To prove the upper bound, define the layers $L_0,L_1,\dots$, augmentation $H$ and  tree of peels $\calT$ as in Section~\ref{sec:preliminaries}.
\changed{If any interior node of $\calT$ stores fewer than $g$ vertices,
then repoert that $g$ was too big (cf.~Obs.~\ref{obs:Tproperty}(\ref{it:nodeSize})) and abort.  Otherwise, a}pply 
the separator theorem (Theorem~\ref{thm:LT}) to the tree $\calT$, 
using node-weight $w(N):=|V(N)|$, i.e., the number of stored vertices. 
\changed{Let $S$ be a node} such that any subtree $\calT'$ of $\calT\setminus S$ stores
at most $\tfrac{n}{2}$ vertices; we write $V(\calT')$ for these stored vertices.      We know that $S\neq R$,
because the root has only one child, and $w(\calT\setminus R)=n-1>\tfrac{n}{2}$.

Crucially, $S$ is close to all other nodes of $\calT$.   Since we will frequently
need the following upper bound, we introduce a convenient \changed{notation} $\delta$ for it.

\begin{observation}
\label{obs:distT}
For any node $Z$ of $\calT$ we have $d_\calT(S,Z)\leq \delta:=\lfloor \tfrac{n-2}{2g} \rfloor +1$.   
\end{observation}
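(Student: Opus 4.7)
The plan is to let $d=d_\calT(S,Z)$, consider the unique path $S=A_0,A_1,\dots,A_d=Z$ in $\calT$, and bound $d$ by counting the vertices stored along this path. The cases $d\in\{0,1\}$ are immediate because $\delta\geq 1$, so I may assume $d\geq 2$.

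The first step is to note that $A_1,\dots,A_d$ all lie in a single subtree $\calT'$ of $\calT\setminus S$, namely the one rooted at $A_1$: the unique path from $A_1$ to $A_d$ in $\calT$ is $A_1,A_2,\dots,A_d$ and does not revisit $S$. By the choice of $S$ via the separator theorem (Theorem~\ref{thm:LT}) applied with weights $w(N)=|V(N)|$, we have $|V(\calT')|\leq n/2$, and therefore
\[
\sum_{i=1}^{d} |V(A_i)| \;\leq\; |V(\calT')| \;\leq\; \tfrac{n}{2}.
\]

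The second step is to show that every intermediate node $A_i$ with $1\leq i\leq d-1$ is an interior node of $\calT$ (so that Obs.~\ref{obs:Tproperty}(3) applies). Such an $A_i$ has both $A_{i-1}$ and $A_{i+1}$ as neighbors in $\calT$, so $\deg_\calT(A_i)\geq 2$ and $A_i$ is not a leaf; furthermore, Obs.~\ref{obs:Tproperty}(1) tells us that the root $R$ has only one child and hence $\deg_\calT(R)=1$, so $A_i\neq R$. Thus $A_i$ is interior. Because the abort-check earlier in the section guarantees $g\leq\girth$, Obs.~\ref{obs:Tproperty}(3) gives $|V(A_i)|\geq g$ for each such $i$. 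Combining with $|V(A_d)|=|V(Z)|\geq 1$, the displayed inequality becomes $(d-1)g+1\leq n/2$, that is, $d-1\leq (n-2)/(2g)$; since $d-1$ is a non-negative integer, this yields $d\leq \lfloor (n-2)/(2g)\rfloor+1=\delta$, as required.

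The only genuinely delicate point is the interior-node argument: it needs both that the root has degree $1$ (to preclude any intermediate $A_i$ from being $R$) and that the abort-check actually ensures $g\leq\girth$, so that Obs.~\ref{obs:Tproperty}(3) gives the lower bound $g$ on $|V(A_i)|$ rather than just the (possibly larger) fence-girth. Once those two points are in place, the rest is routine arithmetic.
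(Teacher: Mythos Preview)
Your proof is correct and follows essentially the same approach as the paper's: bound the number of vertices stored along the $S$--$Z$ path by noting that the path (minus $S$) lies in one subtree of $\calT\setminus S$, that each intermediate node is interior in $\calT$ and hence stores at least $g$ vertices, and that $Z$ stores at least one. The only cosmetic difference is that the paper appeals directly to the abort-check (every interior node stores $\geq g$ vertices) rather than routing through $g\leq\girth$ and Obs.~\ref{obs:Tproperty}(\ref{it:nodeSize}), and it does not separately treat the trivial cases $d\leq 1$.
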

\begin{proof}
Let $\Pi$ be the path from $S$ to $Z$ in $\calT$.
Every interior node $N$ of $\Pi$ 
belongs to the same subtree $\calT'$ of $\calT\setminus S$, and 
satisfies $|V(N)|\geq g$ since it is neither leaf nor root. 
Node $Z\neq S$ also belongs to $\calT'$ and $|V(Z)|\geq 1$.
Altogether therefore
$\tfrac{n}{2}\geq |V(\calT')|\geq g(|\Pi|{-}1)+1$ or $|\Pi|\leq \tfrac{n-2}{2g} + 1$,
which implies $d_{\calT}(S,Z)=|\Pi|\leq \delta$ by integrality.
\qed\end{proof}


The overall idea of our proof is now to pick a vertex $s\in V(S)$, and to argue that $d_H(s,z)\leq \delta + 2g-2$ for all vertices $z$.    Actually, for most cases below, this will hold for \emph{any} choice of $s\in S$.


\paragraph{Vertices stored in $\calT_R$:}
Let $\calT_R$ be
the subtree of $\calT\setminus S$ that contains root $R$.
For vertices in this subtree,
the detour method proves the distance-bound.

\begin{numclaim} 
\label{cl:caseRoot}
Assume that $g\geq 3$.
For any $s\in V(S)$ and any $z\in V(\calT_R)$ 
we have $d_{H}(s,z)\leq \delta + g$.
\end{numclaim}
\begin{proof}
Figure\changed{s~\ref{fig:detourMethod} and \ref{fig:method1}}
illustrate this proof.
Let $Z$ be the node that stores $z$, and let $A_0$ be the least
common ancestor of $S$ and $Z$; this is a strict ancestor of $S$ since $Z\in \calT_R$.   
Follow directed edges from $s$ to some vertex $s_0\in V(A_0)$; the resulting path $\pi_s$
has length $d_\calT(S,A_0)$ since every directed edge gets us closer to the root.
Likewise we can get a path $\pi_z$ of length $d_\calT(Z,A_0)$ from $z$ to some node $z_0\in V(A_0)$
(possibly $z_0=z$).

Now apply the detour-method with $A_0,s_0,z_0$, using $\xi_i=g{-}1$.   This will always exit with `success' at 
a node $A_{i^*}$ that is not the root,
because any two vertices stored at the child of the root have outgoing edges towards
the root-vertex $\rho$ and hence distance $2\leq g{-}1$.
Let $\tau_s,\tau_z,\sigma$ be the paths, then
$\pi:= s {\overset{\pi_s}{\:\text\squiggleLR\!}} s_0 {\overset{\tau_s}{\:\text\squiggleLR\!}} s_{i^*}
{\overset{\sigma}{\:\text\squigglea\!}} z_{i^*} {\overset{\tau_z}{\leftsquigarrow}} z_0 {\overset{\pi_z}{\leftsquigarrow}} z$
has length $d_{\calT}(S,Z)+2i^*+|\sigma|$.   

To bound $|\pi|$, consider the path
$\Pi$ from $S$ to $Z$ in $\calT$, which goes through $A_0$\changed{,}
and let $A_1,\dots,A_{i^*}$ be the ancestors of $A_0$ that were visited by the detour-method.   
If $i^*>0$, then 
by Obs.~\ref{obs:pathWithinNode}
we have $|V(A_j)|\geq 2(\xi_j{+}1)=2g$ for $j=1,\dots,i^*{-}1$,
node $A_{i^*}$ and the interior nodes of $\Pi$ store at least $g$ vertices each which nodes $Z$ and $R$ store at least one vertex.   Therefore
$$\tfrac{n}{2} \geq |V(\calT_R)| \geq 
2g(i^*{-}1) 
+g
+ (|\Pi|{-}1)g
+1+1
\ge g(|\Pi|+2i^*-2)+2,$$
hence $|\Pi|+2i^* \leq \tfrac{n{-}4}{2g}+2$ which is at most $\delta{+}1$ by integrality.
If $i^*=0$ then $|\Pi|+2i^* \leq \delta$ by Obs.~\ref{obs:distT}.
Either way $|\pi|=|\Pi|+2i^*+|\sigma|\leq \delta+1+ (g{-}1)$.
\qed\end{proof}

\paragraph{Vertices at descendants of $S$:}
In light of Claim~\ref{cl:caseRoot}, we only need to 
worry about vertices that are stored at descendants of $S$.   
We first introduce two methods to find short paths for these in special situations.   Recall that $\myalpha(S)$ denotes the number of vertices stored at strict ancestors of \changed{node} $S$.

\begin{numclaim}
\label{cl:alphaSmall}
\label{cl:smallAlpha}
Assume that $\myalpha(S)\leq g^2$ and $g\geq 3$.
Then  for any $s\in V(S)$ and any vertex $z$ stored
at a descendant $Z$ of $S$ we have
$d_H(s,z)\leq \delta+2g-2$.
\end{numclaim} 
\begin{proof}
See also Figure~\ref{fig:method2}.
Use directed edges to find a path $\pi_z$ of length $d_\calT(Z,S)\leq \delta$ from $z$ to some vertex $z_0\in V(S)$.
We know $d_H(s,z_0)\leq \max\{2\lfloor \sqrt{\myalpha(S)}\rfloor {-}2,4\} \leq 2g-2$ by Lemma~\ref{lem:connectS}.
Combining the paths gives the desired length: $d_H(z,s)\leq d_H(z,z_0)+d_H(z_0,s)\leq \delta+2g-2$.
\qed\end{proof}

\begin{numclaim}
\label{cl:smallSwitcher}
Let $D$ be a descendant of $S$.   
Let $s_D\in V(D)$ be a vertex of eccentricity at most  $\lfloor \tfrac{|V(D)|}{2} \rfloor$ in the
connected graph $H[V(D)]$ (Obs.~\ref{obs:radiusUpper}).   Follow directed edges from $s_D$  to reach a  \changed{vertex} $s\in V(S)$.
Then for any vertex $z$ stored
at a descendant $Z$ of $D$ we have $d_H(s,z)
\leq d_\calT(Z,S)+\lfloor \tfrac{|V(D)|}{2}\rfloor$.
\end{numclaim}
\begin{proof}
See Figure~\ref{fig:method3}.
%
Use directed edges to go from $z$ to a vertex $z_D\in V(D)$ along a path
$\pi_z$ of length $d_\calT(Z,D)$.   Find a path $\sigma$ of length at most $\lfloor \tfrac{|V(D)|}{2} \rfloor$ to connect $z_D$ to $s_D$,
and then follow $d_\calT(D,S)$ directed edges to get to $s$.
Combining the paths gives the desired length since $d_\calT(Z,D)+d_\calT(D,S)=d_\calT(Z,S)$.
\qed\end{proof}

\begin{corollary}
\label{cor:smallS}
\label{cl:smallS}
Assume that $|V(S)|\leq 4g-3$.
Then there exists an $s\in V(S)$ such that for any vertex $z$ stored
at a descendant of $S$ we have $d_H(s,z)\leq \delta+ 2g-2$.
\end{corollary}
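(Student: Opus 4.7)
The plan is to apply Claim~\ref{cl:smallSwitcher} with the special choice $D := S$. This collapses the ``descent from $D$ to $S$'' step (since $d_\calT(D,S) = 0$) and reduces the corollary to a statement about distances inside $V(S)$ only, which is exactly where the size hypothesis $|V(S)| \leq 4g-3$ can be used.

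First I would recall, from the construction of the tree of peels, that each $V(N)$ induces a connected subgraph of $H$. Apply Obs.~\ref{obs:radiusUpper} to the connected graph $H[V(S)]$: in linear time this yields a vertex $s_S \in V(S)$ whose eccentricity in $H[V(S)]$ is at most $\lfloor |V(S)|/2 \rfloor$. I would set $s := s_S$ as the claimed vertex.

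Next, apply Claim~\ref{cl:smallSwitcher} with $D = S$ and $s_D = s_S$. Since $D = S$, following $d_\calT(D,S) = 0$ directed edges leaves the vertex unchanged, so the vertex returned by the claim is exactly $s$. For every vertex $z$ stored at a descendant $Z$ of $S$, the claim then yields
\[
d_H(s,z) \;\leq\; d_\calT(Z,S) + \lfloor |V(S)|/2 \rfloor.
\]
By Obs.~\ref{obs:distT} the first summand is at most $\delta$, and by the hypothesis $|V(S)| \leq 4g-3$ the second summand is at most $\lfloor (4g-3)/2 \rfloor = 2g-2$. Adding these gives $d_H(s,z) \leq \delta + 2g - 2$, as required. (The boundary case $z \in V(S)$ is covered directly by the eccentricity bound on $s_S$, which is already $\leq 2g-2$.)

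I do not expect any real obstacle: the corollary is engineered so that the cap $4g-3$ produces exactly the additive slack $2g-2$ in the target bound (the odd value of the cap is what makes the floor drop to $2g-2$ rather than $2g-1$). The content of the argument is already contained in Claim~\ref{cl:smallSwitcher} and Obs.~\ref{obs:radiusUpper}; what the corollary adds is merely the observation that when $S$ is small, one need not descend at all to find a useful base-point.
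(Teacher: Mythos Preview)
Your proposal is correct and matches the paper's intended argument exactly: the corollary is stated immediately after Claim~\ref{cl:smallSwitcher} precisely because it follows by taking $D:=S$ there, and the paper's main proof (Theorem~\ref{thm:main}) also sets $D:=S$ in this case. The arithmetic $\lfloor (4g-3)/2\rfloor = 2g-2$ together with $d_\calT(Z,S)\le \delta$ from Obs.~\ref{obs:distT} is all that is needed.
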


So we are done if $\myalpha(S)\leq g^2$ or $|V(S)|\leq 4g-3$.
Otherwise we distinguish the descendants of $S$ by their depth,
using Claim~\ref{cl:smallSwitcher} (for a carefully chosen $D$) for the `deep' ones
and the method of Claim~\ref{cl:alphaSmall} for the others.
Define the \emph{threshold-value} $\theta=\lceil \tfrac{n-\myalpha(S)}{2g}\rceil -1$,
call a node $Z$ \emph{deep} if it is a descendant of $S$ with $d_\calT(S,Z)\geq \theta$,
and call a vertex \emph{deep} if it is stored at a deep node.
A straightforward math manipulation gives the following upper bound.

\begin{restatable}{observation}{Thetabound}
\label{obs:notDeep}
\label{obs:theta}
We have $\theta \leq \delta - 2\lceil \sqrt{\myalpha(S)} \rceil + 2g+1$.
\end{restatable}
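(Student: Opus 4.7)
The plan is to prove this by pure algebra.  Set $a := \myalpha(S)$ and note that $\theta$ and $\delta$ are both integers by definition.

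First I would expand $\theta - \delta = \lceil (n-a)/(2g)\rceil - \lfloor (n-2)/(2g)\rfloor - 2$ and apply the standard bounds $\lceil m/(2g)\rceil \leq (m+2g-1)/(2g)$ and $\lfloor m/(2g)\rfloor \geq (m-2g+1)/(2g)$ (valid for any integer $m$) to the two terms.  The $\pm(2g-1)/(2g)$ corrections cancel, yielding $\theta \leq \delta - a/(2g)$.

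Second, I would show that $2\lceil \sqrt{a}\rceil - a/(2g) < 2g + 2$.  Writing $k := \lceil \sqrt{a}\rceil$, the integrality of $a$ together with $a > (k-1)^2$ gives $a \geq k^2 - 2k + 2$, so $2k - a/(2g) \leq h(k) := 2k - (k^2-2k+2)/(2g)$.  Treating $h$ as a real-variable function, the derivative $h'(k) = 2 - (k-1)/g$ vanishes at $k = 2g+1$, giving the maximum value $h(2g+1) = 2g + 2 - 1/(2g)$, which is strictly less than $2g+2$.

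Combining these, $\theta \leq \delta - a/(2g) < \delta - 2\lceil \sqrt{a}\rceil + 2g + 2$.  Since both sides of the target inequality are integers, a strict real-valued gap of less than $1$ immediately implies the claimed $\theta \leq \delta - 2\lceil \sqrt{a}\rceil + 2g + 1$.  The only real subtlety is the bookkeeping with ceilings, floors, and integrality: one must turn the strict real-valued inequality into a weak integer inequality without leaking an extra additive constant of~$1$ — in particular, the margin $1/(2g)$ coming from the case $k=2g+1$ is exactly what makes the argument go through.
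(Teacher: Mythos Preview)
Your argument is correct and follows essentially the same two-step route as the paper: first bound $\theta \leq \delta - \myalpha(S)/(2g)$ via the floor/ceiling estimates (your ``corrections cancel'' phrasing is a bit loose---they actually combine with the $-2$ to leave exactly $-\myalpha(S)/(2g)$---but the conclusion is right), then show $\myalpha(S)/(2g) > 2\lceil\sqrt{\myalpha(S)}\rceil - 2g - 2$ and invoke integrality. The only cosmetic difference is in the second step: the paper gets it in one line from $(\sqrt{\myalpha(S)}-2g)^2\geq 0$ together with $\sqrt{\myalpha(S)} > \lceil\sqrt{\myalpha(S)}\rceil - 1$, whereas you optimize the quadratic $h(k)$ explicitly; these are the same inequality seen two ways. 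One tiny caveat: your step $a > (k-1)^2 \Rightarrow a \geq k^2-2k+2$ needs $k\geq 1$ (it fails at $a=0$), but since $S\neq R$ we always have $\myalpha(S)\geq 1$, so this is harmless in context.
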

\begin{proof}
Recall that $\delta=\lfloor \tfrac{n-2}{2g}\rfloor +1$ is an integer and observe that $\delta\geq \tfrac{n-1}{2g}$.
We also know that 
$$0\leq (\sqrt{\myalpha(S)}-2g)^2 = \myalpha(S)-4g\sqrt{\myalpha(S)}+4g^2
\quad  $$ $$\text{ or }\quad
\tfrac{\myalpha(S)}{2g} \geq 2\sqrt{\myalpha(S)}-2g> 2\lceil \sqrt{\myalpha(S)}\rceil -2g-2. $$
Therefore
\begin{align*}
\theta 
& = \lceil \tfrac{n-\myalpha(S)}{2g} \rceil  -1 \leq \tfrac{n-\myalpha(S)-1}{2g}  
 = \tfrac{n-1}{2g} - \tfrac{\myalpha(S)}{2g} < \delta - 2\lceil \sqrt{\myalpha(S)}\rceil +2g +2. 
\end{align*}
which yields the result by integrality. 
\qed\end{proof}
\begin{numclaim}
\label{claim:caseClose}
\label{cl:caseClose}
\label{claim:notDeep}
\label{cl:notDeep}
Assume that $g\geq 3$ and $|V(S)|\geq 4g-2$ and  $\myalpha(S)>g^2$.  
Then for any $s\in V(S)$ and any $z$
stored at a descendant $Z$ of $S$ that is not deep, 
we have $d_{H}(s,z)\leq \delta + 2g-2$.
\end{numclaim}
\begin{proof}
The method is exactly the same as in the proof of Claim~\ref{cl:alphaSmall} (walk from $z$ to a 
vertex $z_0\in V(S)$ and apply Lemma~\ref{lem:connectS} to connect $z_0$ to $s$, see also Figure~\ref{fig:method2}), 
but the analysis is different.   Since $Z$ is not deep, 
we have $d_H(z,z_0)=d_\calT(Z,S)\leq \theta{-}1 \leq \delta{-}2\lceil \sqrt{\myalpha(S)}\rceil {+}2g$. By $\myalpha(S)>g^2\geq 9$ 
we have $d_H(z_0,s)\leq 2\lceil \sqrt{\myalpha(S)}\rceil {-}2$ and so $d_H(z,s)\leq d_H(z,z_0)+d(z_0,s)\leq \delta+2g-2$.
\qed\end{proof}

For deep nodes we show that there exists a suitable node with which to apply Claim~\ref{cl:smallSwitcher}.
This is proved via a counting-argument: due to the (carefully chosen) threshold $\theta$ otherwise more than $n$ vertices would be stored in tree $\calT$. 
\tb{proof moved to the appendix}

\begin{numclaim}
\label{cl:smallCommonAncestor}
Assume that at least $4g-|V(S)|+1$ vertices are deep. 
Then there exists a descendant $D$ of $S$ (possibly $S$ itself) such that
$D$ is an ancestor of all deep nodes, and $|V(D)|\leq 2g-1$.  
\end{numclaim}
\begin{proof}  
Let $X$ be the least common ancestor of all deep nodes. This is a descendant of $S$ as well (possibly $X=S$), so
enumerate the path from $S$ to $X$ as $S{=}D_0,D_1,\dots,\allowbreak D_k{=}X$ for some $k\geq 0$.   We are done 
if $|V(D_i)|\leq 2g-1$ for some $0\leq i\leq k$, so assume (for contradiction) that the $k$ nodes $D_1,\dots,D_k$
store at least $2g$ vertices each.   If $X$ were deep (so $k\geq \theta$) then 
$D_1,\dots,D_{k}$ would store at least $\theta\cdot 2g \geq n-\myalpha(S)-2g$ vertices.   We also store $\myalpha(S)$ vertices at
strict ancestors at $S$, $|V(S)|$ vertices at $S$ and at least $4g-|V(S)|+1$ deep vertices, in total hence
more than $n$, impossible.

So $X$ is not deep, see also Figure~\ref{fig:leastCommonAncestor}.
Since $X$ is the least common ancestor of deep descendants, therefore it must have at least two children $X^1,X^2$
that are ancestors of deep descendants.   For $j=1,2$, enumerate the path from $X^j$ to a deep descendant
of $X^j$ as $D^j_{k+1},\dots,D^j_\theta$. Then for $i=k{+}1,\dots,\theta{-}1$
node $D_i^j$ is not a leaf and stores at least $g$ vertices, so $|V(D_i^1)|+|V(D_i^2)|\geq 2g$.   
So for $i=1,\dots,\theta-1$ we can find $2g$ vertices stored at not-deep descendants of distance $i$ from $S$.
In total these not-deep strict descendants of $S$ hence store at least
$(\theta{-}1)2g\geq n-\myalpha(S)-4g$ vertices.   As above therefore more than $n$ vertices are stored in the tree of peels, impossible.
\qed\end{proof}

Now we put everything together.

\begin{theorem}
\label{thm:main}
Let $G$ be a planar graph with $n\geq 3$ vertices and let $g\geq 3$ be an integer that is at most the fence-girth of $G$.
Then $G$ has a planar supergraph $H$ with $\rad{H}\leq \lfloor \tfrac{n-2}{2g} \rfloor +2g-1$.
Furthermore, a vertex of $H$ with this eccentricity can be found in linear time.
\end{theorem}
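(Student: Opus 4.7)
The plan is to assemble the toolbox and case-specific claims developed earlier in this section. First I would construct the augmentation $H$ of $G$ described in Section~\ref{sec:toolbox} (so every non-root vertex has an outgoing directed edge and $\calT$ is preserved by Observation~\ref{obs:augmentedT}), build the tree of peels $\calT$, and apply the tree separator theorem (Theorem~\ref{thm:LT}) with node-weight $w(N) = |V(N)|$ to obtain the separator node $S$. Since $\delta = \lfloor \tfrac{n-2}{2g}\rfloor + 1$, the target bound $\lfloor \tfrac{n-2}{2g}\rfloor + 2g - 1$ coincides with $\delta + 2g - 2$, which is what all the individual distance bounds have been tuned to match.

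The crux is choosing one vertex $s \in V(S)$ whose eccentricity in $H$ meets the bound, via a three-way case analysis driven by $|V(S)|$ and $\myalpha(S)$. If $|V(S)| \leq 4g - 3$, I take the $s$ guaranteed by Corollary~\ref{cor:smallS}. If $\myalpha(S) \leq g^2$, any $s \in V(S)$ works, since Claim~\ref{cl:smallAlpha} then handles every descendant of $S$. Otherwise $|V(S)| \geq 4g - 2$ and $\myalpha(S) > g^2$, and I apply Claim~\ref{cl:smallCommonAncestor} to find a descendant $D$ of $S$ with $|V(D)| \leq 2g - 1$ that is an ancestor of every deep node; next I invoke Observation~\ref{obs:radiusUpper} on $H[V(D)]$ to obtain a vertex $s_D$ of eccentricity at most $\lfloor |V(D)|/2\rfloor \leq g - 1$ in that subgraph, and finally walk along directed edges from $s_D$ up to a vertex $s \in V(S)$ as in Claim~\ref{cl:smallSwitcher}.

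To verify $d_H(s,z) \leq \delta + 2g - 2$ for every vertex $z$, I would split on the node $Z$ storing $z$. Vertices with $Z \in \calT_R$ are covered by Claim~\ref{cl:caseRoot} (valid for any $s \in V(S)$ and yielding at most $\delta + g \leq \delta + 2g - 2$ because $g \geq 2$). In the first two cases, Corollary~\ref{cor:smallS} or Claim~\ref{cl:smallAlpha} already covers every descendant of $S$. In the third case, Claim~\ref{cl:notDeep} (again valid for any $s \in V(S)$) covers the non-deep descendants, while Claim~\ref{cl:smallSwitcher} gives $d_H(s,z) \leq d_\calT(Z,S) + \lfloor |V(D)|/2 \rfloor \leq \delta + g - 1$ for every deep descendant.

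The main obstacle I anticipate is the sub-case of the third case in which the number of deep vertices is smaller than $4g - |V(S)| + 1$, so Claim~\ref{cl:smallCommonAncestor} does not formally apply; under the standing hypotheses this forces at most two deep vertices, and I would resolve it by extending the counting argument inside the proof of Claim~\ref{cl:smallCommonAncestor} to show that the least common ancestor $X$ of those few deep nodes already satisfies $|V(X)| \leq 2g - 1$, so that Claim~\ref{cl:smallSwitcher} can still be invoked with $D := X$. The linear-time claim then follows directly: constructing $H$ and $\calT$ is linear, the separator $S$ is linear by Theorem~\ref{thm:LT}, the center of $H[V(D)]$ is linear by Observation~\ref{obs:radiusUpper}, and each directed-edge walk used to reach $s$ has length $O(\delta + g) = O(n)$.
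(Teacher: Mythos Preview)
Your overall architecture matches the paper's proof: build $H$ and $\calT$, take the separator $S$, aim for the target $\delta+2g-2$, and split vertices into $V(\calT_R)$, non-deep descendants, and deep descendants, invoking Claims~\ref{cl:caseRoot}, \ref{cl:notDeep}, \ref{cl:smallAlpha}, Corollary~\ref{cor:smallS}, and Claim~\ref{cl:smallSwitcher} exactly as the paper does. You also correctly isolate the one delicate sub-case.

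The gap is in how you propose to dispatch that sub-case. When $|V(S)|\geq 4g-2$, $\myalpha(S)>g^2$, and there are at most $4g-|V(S)|\leq 2$ deep vertices, you want to rerun the counting argument of Claim~\ref{cl:smallCommonAncestor} to force the least common ancestor $X$ of the deep nodes to satisfy $|V(X)|\leq 2g-1$. This does not go through. With two deep vertices sitting at two distinct depth-$\theta$ leaves (possible when $|V(S)|=4g-2$), the count yields at least $2\theta g+2g+\myalpha(S)\geq n$, not strictly more than $n$; when $n-\myalpha(S)$ is a multiple of $2g$ all inequalities can be tight, so $|V(X)|=2g$ is not excluded and you get no contradiction.

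The paper avoids this entirely by a different trick: in this sub-case it sets $D:=S$ and uses the $s$ from Claim~\ref{cl:smallSwitcher} applied to $S$ itself. The point is that the existence of a deep vertex together with the hypothesis on the number of deep vertices forces $|V(S)|\leq 4g-1$, so $\lfloor |V(S)|/2\rfloor\leq 2g-1$; and since the parent of any deep node cannot be deep (it would contribute $g\geq 3$ deep vertices), every deep $Z$ satisfies $d_\calT(Z,S)=\theta$, while $\myalpha(S)>g^2$ gives $\lceil\sqrt{\myalpha(S)}\rceil\geq g+1$ and hence, via Observation~\ref{obs:theta}, $\theta\leq \delta-1$. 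Claim~\ref{cl:smallSwitcher} then yields $d_H(s,z)\leq \theta+\lfloor|V(S)|/2\rfloor\leq(\delta-1)+(2g-1)=\delta+2g-2$. Replacing your proposed fix with this argument makes the proof go through; the rest of your plan is sound.
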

\begin{proof}
Compute layers $L_0,L_1,\dots$, augmentation $H$, and tree of peels $\calT$ as in Section~\ref{sec:toolbox}.
Find separator-node $S$ and compute $|V(S)|$, $\myalpha(S), \theta$,
and the number of deep vertices.
If $|V(S)|\geq 4g{-}2$ and there are at least $4g{-}|V(S)|{+}1$ deep vertices, then arbitrarily fix a deep node $Z$ and find node~$D$ of Claim~\ref{cl:smallCommonAncestor}
by walking from $S$ towards $Z$ until we encounter a node that stores at most $2g-1$ vertices.
In all other cases set $D:=S$.   Pick $s$ as in Claim~\ref{cl:smallSwitcher} applied to $D$.   Each of these steps takes linear time
(we elaborate on this in Section~\ref{app:linear}).

Applying various cases we show that $d_H(s,z)\leq\delta+2g-2$ for all $z$ (which implies the result by $\delta=\lfloor \tfrac{n-2}{2g} \rfloor +1$).
This holds for all $z\in V(\calT_R)$ by Claim~\ref{cl:caseRoot}, so consider a vertex $z$ stored at a descendant $Z$ of $S$.
The bound holds by Claim~\ref{cl:alphaSmall} if $\myalpha(S)\leq g^2$ and by Corollary~\ref{cl:smallS} if $|V(S)|\leq 4g-3$, so
assume neither.    If $Z$ is not deep, then apply Claim~\ref{cl:notDeep}.
If $Z$ is deep and there are at least $4g-|V(S)|+1$ deep vertices, then combine Claim~\ref{cl:smallCommonAncestor} with Claim~\ref{cl:smallSwitcher}
to get the bound.  
The only remaining case is that $\myalpha(S)>g^2$, $|V(S)|\geq 4g-2$, $Z$ is deep, 
and at most $4g-|V(S)|$ vertices are deep (so $|V(S)|\leq 4g-1$ since there is a deep vertex at $Z$).
In this case, $d_\calT(Z,S)=\theta$, for otherwise $Z$'s parent would also be deep and store $g\geq 3\geq 4g-|V(S)|+1$ deep vertices.
Also $\theta\leq \delta-2\lceil \sqrt{\myalpha(S)} \rceil +2g+1 \leq \delta-1$ by $\lceil \sqrt{\myalpha(S)} \rceil \geq g+1$.
We used $D=S$ and picked $s\in V(S)$ as in  Claim~\ref{cl:smallSwitcher}, so this 
gives $d_H(z,s)\leq d_T(Z,S)+\lceil \tfrac{V(S)}{2} \rceil \leq \delta-1+2g-1$ as desired.
\qed\end{proof}

Theorem~\ref{thm:main} implies {\bf C2} from the introduction, for if $G$ is triangulated then necessarily $H=G$ and so the
radius-bound holds for the input-graph $G$ as well.      It also implies {\bf C1}:

\begin{corollary}
\label{cor:main}
Let $G$ be a spherically-embedded graph with $n\geq 3$ vertices and let $g\geq 3$ be an integer that is at most the fence-girth of $G$.
Then $G$ has fse-outerplanarity at most $\lfloor \tfrac{n-2}{2g} \rfloor +2g$.
Furthermore, an outerface of $G$ that achieves this outerplanarity can be found in linear time.
\end{corollary}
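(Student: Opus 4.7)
The plan is to reduce the corollary to the radius estimate of Theorem~\ref{thm:main} by translating the eccentricity bound in the augmented graph $H$ into an outerplanarity bound for $G$ itself. First I would invoke Theorem~\ref{thm:main} in linear time to obtain the augmentation $H$ of $G$ and a vertex $s$ with $d_H(s,z) \leq R := \lfloor \tfrac{n-2}{2g}\rfloor + 2g - 1$ for every $z$. Then I would pick any face $F$ of $G$ (in the given spherical embedding) that is incident to $s$ and use it as the outerface; locating $F$ takes constant time once $s$ is known.

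The main claim would be: with outerface $F$, every vertex $z$ of $G$ lies in some peel $P_i$ with $i \leq d_H(s,z)+1$. Combined with $d_H(s,z)\leq R$, this gives fse-outerplanarity at most $R+1 = \lfloor \tfrac{n-2}{2g}\rfloor + 2g$, as required, and everything is linear time. I would prove the claim by induction on $d := d_H(s,z)$. The base $d=0$ gives $z=s\in P_1$, since $F$ is incident to $s$. For the inductive step, pick a neighbour $z'$ of $z$ in $H$ with $d_H(s,z')=d-1$; by induction $z'\in P_k$ for some $k\leq d$. The edge $(z,z')$ is either a $G$-edge or an augmentation edge, and in both cases $z$ and $z'$ lie on a common face of $G$'s embedding, because augmentation edges (added via Claim~\ref{claim:parents}) must be drawn inside existing faces of $G$ to keep $H$ planar. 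After the peels $P_1,\dots,P_{k-1}$ have been removed, this common face has merged into some (possibly enlarged) face $\tilde F$ of $G\setminus(P_1\cup\cdots\cup P_{k-1})$ whose boundary still contains both $z$ and $z'$. Since $z'\in P_k$, vertex $z'$ is also on the outerface of $G\setminus(P_1\cup\cdots\cup P_{k-1})$, so $\tilde F$ and the outerface are both incident to $z'$; removing $z'$ in round $k$ merges them, so $z$ becomes incident to the outerface of $G\setminus(P_1\cup\cdots\cup P_k)$ and hence $z\in P_{k+1}\subseteq P_1\cup\cdots\cup P_{d+1}$.

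The main obstacle I anticipate is exactly the case where $(z,z')$ is an augmentation edge rather than a $G$-edge: there is then no direct $G$-edge between $z$ and $z'$, so one cannot argue in the simplest way that removing $z'$ immediately exposes $z$ on the outerface. The remedy is the structural observation already used for Obs.~\ref{obs:augmentedT}, namely that every augmentation edge is drawn inside a face of $G$'s embedding, so $z$ and $z'$ are co-incident to some face of $G$ and the face-merging argument then propagates the outerface to $z$ exactly one round after $z'$ is peeled. All remaining steps (running Theorem~\ref{thm:main}, locating $F$, and performing the peeling) are linear time, which yields the claimed linear-time algorithm.
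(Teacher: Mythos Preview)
Your proof is correct and follows the same overall approach as the paper: invoke Theorem~\ref{thm:main} to obtain $H$ and the small-eccentricity vertex $s$, then convert the eccentricity bound into a peel bound by choosing an outerface incident to $s$. The paper's proof is shorter because it simply cites Observation~\ref{obs:simpleBound} (applied to $H$) together with the implicit fact---already used in the proof of that observation---that the outerplanarity of a plane subgraph cannot exceed that of a plane supergraph; you instead unpack this step directly by exploiting that every $H$-edge (including augmentation edges) joins two vertices sharing a face of $G$, which lets you bound $G$'s peel index by $d_H(s,\cdot)+1$ without passing through the peels of $H$. Both routes are sound and yield the same bound in linear time; yours is more self-contained, the paper's is more concise.
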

\begin{proof}
Without changing the spherical embedding of $G$,
compute the super-graph $H$ with $\rad{H}\leq 
\lfloor \tfrac{n-2}{2g} \rfloor + 2g-1$;
the outerplanarity bound holds by Obs.~\ref{obs:simpleBound}
and the outerface can be found by picking any face incident to the vertex $s$ that achieves this eccentricity.
\qed\end{proof}

\paragraph{Discussion:}
Theorem~\ref{thm:main} requires $g\geq 3$.   We can always choose such a $g$ if $G$ is simple, but if $G$ has parallel edges
or loops (which we did not exclude) then in the fixed spherical embedding the \girth may only be 2 or 1.   We hence briefly
discuss the case $g\in \{1,2\}$.
Going through all proofs where $g\geq 3$ or $|V(N)|\geq 3$ is actually used
(Lemma~\ref{lem:distanceS}, Claim~\ref{cl:caseRoot} and~\ref{cl:smallAlpha}), one sees that the results hold
for $g=2$ if we increase the permitted distance-bound by 2.   (Likewise we need to raise the permitted
distance-bound in Claim~\ref{claim:notDeep} since it uses Lemma~\ref{lem:distanceS}.) 
Therefore $\rad{H} \leq \delta+2g=\lfloor \tfrac{n{-}2}{2g} \rfloor +1$ for $g=2$.   For $g=1$, $\rad{H}\leq 
\lfloor \tfrac{n}{2} \rfloor = \lfloor \tfrac{n-2}{2g}\rfloor +1$ by Obs.~\ref{obs:radiusUpper}.
\todo[color=green]{this paragraph could go if we need space}

Theorem~\ref{thm:main} also assumes that we are \emph{given} $g$.   However, the computation of $\calT$ does not depend on $g$,
and the only thing we require is that all its interior nodes store at least $g$ vertices.    So if we are not given $g$, then we can compute 
$\calT$, define $g^*=\min\{|V(N)|: N \text{ is an interior node}\}$ and use $g:=\min\{g^*,\sqrt{n-2}/2\}$ as parameter for Theorem~\ref{thm:main}.


\subsection{Run-time considerations}
\label{app:linear}

\tb{this entire subsection is new and not yet very polished}
In this section, we elaborate on why the various steps of our algorithm can be implemented in linear time.  No complicated data structures are needed for this; all bounds can be obtained via careful accounting of the visited edges.

Our first step is to compute the layers $L_0,L_1,\dots$, given the spherical embedding and the root-vertex $r$.   This can simply be done with a breadth-first search as follows.   Temporarily compute the \emph{radial graph}, which is a bipartite with one vertex class the vertices of $G$ and the other vertex class with one vertex per face of the planar embedding; it has edges whenever a face is incident to a vertex.   Then the layers $L_0,L_1,\dots$ are the same as the even-indexed BFS-layers of the radial graph if we start the breadth first search at root $r$.   Clearly this takes linear time to compute.

Next we must compute the augmentation $H$.   It follows directly from the proof in \cite{Bie15} that this can be done in linear time by scanning each face, but for completeness' sake we repeat this proof and analyze the run-time here.

\begin{numclaim} The augmentation $H$ can be computed in linear time.
\end{numclaim}
\begin{proof}
We first paraphrase the proof from \cite{Bie15} to show that graph $H$ exists.   Consider any face $F$, and fix one vertex $w$ of $F$ that minimizes its layer-number (i.e., the index $i$ of the layer $L_i$ containing $w$). Break ties among choices for $w$ arbitrarily.   For any vertex $v\neq w$ on $F$ that is not in $L_i$, add an edge $(v,w)$ if it did not exist already.    Clearly this maintains planarity since all new edges can be drawn inside face $F$. Repeat at all faces to get the edges for graph $H$.   To see that this satisfies the condition, consider an arbitrary vertex $v$ in layer $L_i$ for some $i>0$.   Thus $v$ was on the outer-face of $G\setminus (L_0\cup \dots \cup L_{i-1})$, but \emph{not} on the outer-face of $G\setminus (L_0\cup \dots \cup L_{i-2})$.   It follows that some face $F$ incident to $v$ had vertices in $L_{i-1}$.   So our procedure added an edge from $v$ to some vertex in face $F$ that is in layer $L_{i-1}$.

To find the edges efficiently, we assume that every vertex stores its layer-number. For each face $F$ we can then walk along $F$ in $O(\deg(F))$ time to find one vertex $w$ with the smallest layer-number.   In
a second walk along $F$,  add all edges $(v,w)$ of $H$ that fall within face $F$.   The only non-trivial step is to check
whether $(v,w)$ already existed.    This could be done with suitable data structures in $O(1)$ amortized time per edge, but the simplest approach is to not check this at all; duplicate edges in $H$ do not hurt us since we add at most $\deg(F)$ edges per face and hence a linear number of edges in total.
\qed\end{proof}

Our next step is to compute the tree  of peels $\calT$ of $H$, for which the main challenge is to compute the connected components after we have deleted some layers $L_0,\dots,L_i$. 
Assume we have kept track of all edges $E_i$ that connect $L_i$ to $L_{i+1}$.   For each $(y,z)\in E_i$ (say with $z\in L_{i+1}$),
walk along the face $F$ to the right of $(y,z)$, from $z$ and away from $y$, until we reach another edge $(y',z')\in E_i$ for which $F$ is to the left. Repeat at the face to the right of $(y',z')$, and continue repeating until we return to edge $(y,z)$ at the face to its left.   The visited vertices form the outer-face of one connected component of $G\setminus (L_0\cup \dots \cup L_i)$, so define a new node $N$ for them, set $V(N)$ to be the visited vertices, and make $N$ the child of the node that stored $y$.   
If there are edges of $E_i$ left that have not been visited yet, then repeat at them to obtain the next node.
At the end we have determined all nodes  that together cover layer $L_{i+1}$.    The run-time for this is proportional to $|L_{i+1}|+|E_i|$, so linear over all layers.

The next few steps (in the proof of Theorem~\ref{thm:main}) are to compute a number of values, and to traverse $\calT$ to determine all deep nodes and the number of vertices that they store; clearly this can be done in $O(|\calT|)$ time.   Likewise we can find the appropriate node $D$ to use in linear time, and finding $s$ can then be done in $O(|V(D)|+|\calT|)$ by using Observation~\ref{obs:radiusUpper} to find a central node $s_D$ in $D$ and then following outgoing edges until we reach $s\in S$.
The rest of the proof is an argument that the radius is small if we use $s$ as center, but we do not actually need to perform any computation here since we already
found the appropriate $s$.   So overall the run-time for Theorem~\ref{thm:main} is linear, and similarly one argues the run-time for Corollary~\ref{cor:main}.

\subsection{Tightness} 
We now design graphs with large outerplanarity (relative to the \girth $g$).  (These graphs
actually have \emph{girth} $g$, i.e., \emph{any} cycle (not just those that are fences) has 
length at least $g$.) We do this
first for the fixed-spherical-embedding outerplanarity, where the lower bounds hold
even for $g\in \{1,2\}$.   Then, for $g\geq 3$ and at a slight decrease of the lower bound, 
we give bounds for the (unrestricted) outerplanarity.
Roughly speaking, the graphs consist of nested cycles of length $g$, with a single
vertex or path inside the innermost / outside the outermost cycle, and (if desired) with edges added to ensure
that all spherical embeddings have the same fse-outerplanarity.   See Figure~\ref{fig:lb}.

\tb{FYI: defn of nested cycles kicked to appendix}

\begin{restatable}{lemma}{LowerFixed}
\label{lem:lowerFixed}
For $g\geq 1$ 
there exists an infinite class $\mathcal{G}_g=\{G_g^k: k\geq 1 \text{ odd}\}$ of spherically embedded graphs of girth and \girth $g$ for
which the fse-outerplanarity is at least
$\tfrac{k+3}{2}=\tfrac{n-2}{2g}+\tfrac{3}{2}$.
\end{restatable}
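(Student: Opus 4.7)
For odd $k\geq 1$, I would take $k$ pairwise vertex-disjoint cycles $C_1,\ldots,C_k$ of length $g$ and embed them as nested concentric curves with $C_{i+1}$ strictly inside $C_i$. I add one extra vertex $v_0$ in the outer region and one extra vertex $v_{k+1}$ inside $C_k$. I connect $v_0$ to some vertex of $C_1$, $v_{k+1}$ to some vertex of $C_k$, and $C_i$ to $C_{i+1}$ (for $1\leq i\leq k{-}1$) each by a single ``bridge'' edge routed through the corresponding annulus. This yields a spherically embedded graph $G_g^k$ on $n=kg+2$ vertices; for $g\in\{1,2\}$ I use loops or digons in place of cycles.

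\textbf{Girth and fence-girth.} Each $C_i$ has length $g$. Any other cycle must use at least two bridges and traverse at least one arc of some $C_i$, giving length at least $g$. Each $C_i$ has vertices both inside and outside it, so it is a fence of length $g$. Hence both parameters equal $g$.

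\textbf{Outerplanarity lower bound.} Define the \emph{level} of a vertex by $\lambda(v_0)=0$, $\lambda(v)=i$ for $v\in V(C_i)$, and $\lambda(v_{k+1})=k{+}1$. Because there is only one bridge between consecutive structures, the embedding has exactly $k{+}1$ faces $F_0,\ldots,F_k$, where the boundary walk of $F_i$ visits every vertex of levels $i$ and $i{+}1$. Fix any outerface $F=F_{i_0}$. I would prove by induction on $\ell\geq 1$ that the set of levels occurring in $L_1\cup\cdots\cup L_\ell$ is an interval $[a_\ell,b_\ell]$ with $a_\ell\geq i_0{-}\ell{+}1$ and $b_\ell\leq i_0{+}\ell$. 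The base case $\ell=1$ is immediate since $L_1=V(F_{i_0})$ has levels $\{i_0,i_0{+}1\}$. For the inductive step, after removing $L_1\cup\cdots\cup L_\ell$, each connected component of the residual graph is a ``sub-nesting'' whose outerface is bounded by one cycle at level $a_\ell{-}1$ and another at level $b_\ell{+}1$. Thus $v_0$ (level $0$) first appears in a peel $L_{p_0}$ with $p_0\geq i_0{+}1$ and $v_{k+1}$ in $L_{p_{k+1}}$ with $p_{k+1}\geq k{+}1{-}i_0$, so the total number of peels is at least
\begin{equation*}
\max(p_0,p_{k+1}) \;\geq\; \left\lceil\tfrac{p_0+p_{k+1}}{2}\right\rceil \;\geq\; \left\lceil\tfrac{k+2}{2}\right\rceil \;=\; \tfrac{k+3}{2},
\end{equation*}
using that $k$ is odd.

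\textbf{Main obstacle.} The delicate step is the inductive claim about level-intervals, particularly once the residual graph becomes disconnected (after both $C_{i_0}$ and $C_{i_0+1}$ have been removed). Each component then has its own outerface contributing to $L_{\ell+1}$, and I would verify that in each component the new outerface exposes exactly one new level in the appropriate direction---this relies on the fact that a single bridge between consecutive cycles is not enough to short-circuit the nesting structure. An alternative formulation via BFS distance in the radial graph of the embedding (counting face-transitions from $F$) may yield a cleaner write-up, since the nesting makes the radial distance from $F$ to $v_0$ plus to $v_{k+1}$ at least $k{+}1$.
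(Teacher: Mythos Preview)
Your construction and overall strategy are essentially the same as the paper's: $k$ nested $g$-cycles with a single extra vertex on each side, then argue that any outerface sits between two consecutive cycles $C_{i_0},C_{i_0+1}$ and each peel can advance by only one level towards the farther extremal vertex. The one substantive difference is that the paper leaves the graph \emph{disconnected} (no bridge edges at all), which removes exactly the ``main obstacle'' you flag. With no bridges, every face is simply the annulus between some $C_i$ and $C_{i+1}$ (or the disc containing $C_0$ or $C_{k+1}$), and after removing $L_1=V(C_i)\cup V(C_{i+1})$ each remaining component is again a stack of bare nested cycles; the fact that peel $L_\ell$ can only expose one further cycle is then immediate, and the whole peel count becomes a one-line argument (``at least $i{+}1$ peels to reach $C_0$, and WLOG $i\geq (k{+}1)/2$''). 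Your bridges buy you connectivity, which the lemma does not require, at the cost of the inductive bookkeeping you correctly identify as delicate.

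One small correction in your girth argument: with a single bridge between consecutive structures, every bridge is a cut-edge, so it cannot lie on \emph{any} cycle. Hence the only cycles in $G_g^k$ are the $C_i$ themselves, and girth $=g$ follows directly; the phrase ``any other cycle must use at least two bridges'' is not the right reasoning (there are no such cycles). The conclusion is unaffected.
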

Before giving this proof, we briefly recall the definition of nested cycles.
Let $\mathcal{C}=\langle C_0,C_1,\dots,C_k,C_{k+1}\rangle$  be a sequence of disjoint subgraphs in a plane graph $G$.
We call $\mathcal{C}$ \emph{nested cycles} if for all $1\leq i\leq k$ subgraph $C_i$ is a cycle that contains
$C_0,\dots,C_{i-1}$ inside and $C_{i+1},\dots,C_{k+1}$ outside.    Note that $C_0$ and $C_{k+1}$ need not be cycles.

\begin{proof}
The graphs in $\mathcal{G}_g$ consist of nested $g$-cycles, with singletons as the
innermost and outermost `cycles'.  Formally, define $C_i$ (for $i=1,\dots,k$)
to be a $g$-cycle, set $C_0$ and $C_{k+1}$ to be singleton vertices,
and arrange $C_0,\dots,C_{k+1}$ as nested cycles to obtain the (disconnected) graph
$G_k^g$ with $n_k=gk+2$ vertices (see Figure~\ref{fig:G43}).

We claim that $G_k^g$ (for $k$ odd) has at least $\tfrac{k+3}{2}$ peels regardless of the
choice $F$ of the outerface.   To see this, let $i\in \{0,\dots,k\}$ be the index such that $F$ is
incident to $C_i$ and $C_{i+1}$; up to symmetry we may assume $i\geq k-i$ and therefore
$i\geq \lceil \tfrac{k}{2} \rceil =\tfrac{k+1}{2}$.
Let $L_1,L_2,\dots$ be the peels
when $F$ is the outerface.    Then $L_1$ contains $C_i$, but no vertex of $C_0\cup \dots \cup C_{i-1}$,
so we must have at least $i+1$ peels (containing $C_i,C_{i-1},\dots,C_1,C_0$).
Since $k=\tfrac{n_k{-}2}{g}$ therefore the number of peels is at least $i+1\geq \tfrac{k+3}{2} = \tfrac{n_k-2}{2g}+\tfrac{3}{2}$. 
\qed\end{proof}

\begin{restatable}{theorem}{Lower}
For $g\geq 3$ there exists an infinite class $\mathcal{H}_g=\{H_g^k: k\geq 3 \text{ odd}\}$ of planar graphs 
of girth and \girth $g$ that have outerplanarity at least $\tfrac{n-2}{2g} +1+\tfrac{3+\chi(\text{$g$ even})}{2g}$.
\end{restatable}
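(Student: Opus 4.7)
The plan is to reuse the nested $g$-cycle skeleton of Lemma~\ref{lem:lowerFixed} but to glue the components together with a small number of added edges (and to slightly enlarge the endpoint `caps' $C_0$ and $C_{k+1}$) so that the resulting graph $H_g^k$ is 3-connected while still having girth exactly $g$.  By Whitney's theorem a 3-connected planar graph has a unique spherical embedding up to reflection, so every spherical embedding of $H_g^k$ is the nested one, and the peeling argument of Lemma~\ref{lem:lowerFixed} then applies to the unrestricted outerplanarity, not merely the fse-outerplanarity.

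Concretely, I would connect each consecutive pair $C_i,C_{i+1}$ by three attachment edges whose endpoints on $C_i$ (respectively on $C_{i+1}$) are pairwise at cyclic distance at least $\lceil g/2\rceil$.  This distance condition ensures that any cycle formed by two attachment edges together with sub-arcs of $C_i$ and $C_{i+1}$ has length at least $2\lceil g/2\rceil + 2\geq g$, so no cycle shorter than the existing $C_i$'s is introduced, and both girth and \girth remain $g$.  The caps $C_0,C_{k+1}$ are turned into small gadgets (a short path or small subcycle) offering three attachment points, contributing a total of $g-1-\chi(g\text{ even})$ extra vertices, so $n = gk + g - 1 - \chi(g\text{ even})$.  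The parity correction appears because for even $g$ the antipodal pair on a $g$-cycle falls on a single edge rather than on two opposite vertices, so one fewer cap-vertex is needed to obtain valid triple attachments.

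Once the construction is fixed, 3-connectedness is verified by a routine case check: after removing any two vertices, each remaining $C_i$ is still connected to each neighbouring cycle through at least one surviving attachment edge and through the surviving arcs of $C_i$ itself.  The peeling count then carries over verbatim from Lemma~\ref{lem:lowerFixed}: for every choice of outerface, one of the two caps is enclosed by at least $\lceil k/2\rceil + 1$ nested cycles, forcing at least $\tfrac{k+3}{2}$ peels.  Substituting the vertex count above into $\tfrac{k+3}{2}$ gives exactly $\tfrac{n-2}{2g} + 1 + \tfrac{3+\chi(g\text{ even})}{2g}$.  I expect the main obstacle to be the simultaneous combinatorial bookkeeping: every attachment edge is a potential shortcut that could either drop the girth below $g$ or leave a 2-vertex cut, and the parity-dependent cap size is precisely the slack needed to make both constraints compatible; the peeling count itself, by contrast, is essentially the one already established for the fixed-embedding case.
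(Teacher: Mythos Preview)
Your plan to push the construction to 3-connectedness (so that Whitney's theorem forces a unique embedding) hits an arithmetic wall.  First, the stated placement condition is unsatisfiable: three vertices on a $g$-cycle determine three arcs summing to $g$, so the shortest arc has length at most $\lfloor g/3\rfloor<\lceil g/2\rceil$; no triple is pairwise at cyclic distance $\geq\lceil g/2\rceil$.  More seriously, \emph{no} placement of three attachment edges between two nested $g$-cycles can work once $g\geq 7$: the three chords split the annulus into three facial cycles whose lengths sum to exactly $2g+6$, so for $g\geq 7$ at least one of those faces is a cycle of length $<g$ and the girth drops below $g$.  Since your vertex count is calibrated to $|C_i|=g$, you cannot lengthen the layer-cycles to absorb this without destroying the claimed bound.

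The paper sidesteps the obstruction by not aiming for 3-connectedness at all (except for $g=3$, where it simply triangulates).  For $g=4$ it inserts only \emph{two} connector-edges per gap, obtaining a merely 2-connected graph $H_k^4$, and then argues embedding-uniqueness by hand: every cutting pair of $H_k^4$ has exactly two components, and the reflective symmetry of the construction means that flipping one component produces an isomorphic graph, so all spherical embeddings coincide up to renaming.  For $g\geq 5$ the graph $H_k^g$ is obtained from $H_k^4$ by subdividing cycle-edges and lengthening the cap paths; since subdivision cannot lower outerplanarity, the $\tfrac{k+3}{2}$ peel bound transfers, and the adjusted vertex count yields the stated expression.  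The ingredient you are missing is this 2-connected-plus-symmetry route (followed by subdivision) in place of Whitney's theorem.
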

\begin{proof}
For $g=3$ the proof is very easy:
Take graph $G_k^3$ from Lemma~\ref{lem:lowerFixed} and arbitrarily triangulate it
while respecting the given spherical embedding. The resulting graph $H_k^3$
has a unique spherical embedding and requires (for $k$ odd) at least $\tfrac{n-2}{2g}+\tfrac{3}{2}=\tfrac{n-2}{2g}+1+\tfrac{3}{2g}$
peels. 
\medskip

For $g=4$ graph $H_k^4$ also extends $G_k^4$, but we must be more careful in how to add edges to
keep the \girth big.  Recall that $G_k^4$ consists of singleton $C_0$,
4-cycles $C_1,\dots,C_k$,  and singleton $C_{k+1}$, arranged as nested cycles.
Enumerate each $C_i$ as $\langle u_i,v_i,w_i,x_i \rangle$, where for $i=0,k{+}1$ all
four names refer to the same vertex.   Let $H_k^4$ be the graph obtained from $G_k^4$
by adding \emph{connector-edges} $(u_i,v_{i+1})$  and $(w_i,x_{i+1})$ for $i=0,\dots,k$ (see Figure~\ref{fig:H43}). 
One easily verifies that $H_k^4$ is bipartite, hence has \girth $4$.
It also is 2-connected and hence any spherical embedding can be 
achieved by permuting or flipping the 3-connected components at a cutting pair
\cite{DiBattista89a}.   But any 
cutting pair of $H_k^4$ has only two cut-components (hence we cannot permute), and
flipping the components gives the same graph (up to renaming) since $H_k^4$ is symmetric.
Therefore all spherical embeddings of $H_k^4$ are the same, up to renaming of vertices.
Hence for odd $k$ graph $H_k^4$ has at least $\tfrac{k+3}{2}=\tfrac{n-2}{2g} + \tfrac{3}{2} = \tfrac{n-2}{2g}+1+\tfrac{3+1}{2g}$
peels in any planar embedding. 
\medskip

\begin{figure}[ht]
    \hspace*{\fill}
\subcaptionbox{$G^4_3$\label{fig:G43}}{\includegraphics[page=1,width=.3\textwidth]{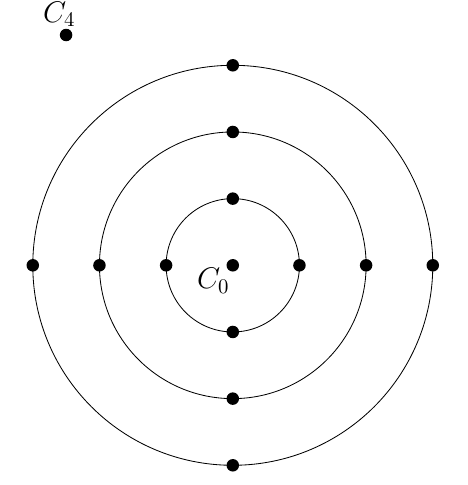}}
    \hspace*{\fill}
\subcaptionbox{$H_3^4$\label{fig:H43}}{\includegraphics[page=2,width=.3\textwidth]{pictures/lowerb.pdf}}
    \hspace*{\fill}
\subcaptionbox{$H^8_3$\label{fig:H83}}{\includegraphics[page=3,width=.3\textwidth]{pictures/lowerb.pdf}}
    \hspace*{\fill}
    \caption{ Graphs that have large outerplanarity relative to the \girth.  }
\end{figure}

Now consider $g\geq 6$ even, and assume that $H_k^{g-2}$ has been defined already.
To obtain $H_k^{g}$ from it, first extend path $C_0$ by one vertex, i.e., it becomes a path
with $\tfrac{g-2}{2}$ vertices from $u_0{=}v_0$ to $w_0{=}x_0$.   Likewise expand path $C_{k+1}$
by one vertex.   Finally for $1\leq i\leq k$, subdivide cycle $C_i$ twice, once on
the part between $u_i$ and $v_i$ and once on the part between $w_i$ and $x_i$.   
The connector-edges remain unchanged.   
Construct $H_k^{g-1}$ similarly from $H_k^{g-2}$: expand paths $C_0$ and $C_{k+1}$
by one vertex, but subdivide $C_i$ (for $1\leq i\leq k$) only once, on the part between $u_i$ and $v_i$.
For both $g$ even and odd, graph $H_k^g$ can be obtained by subdividing edges of $H_k^4$, hence
its outerplanarity cannot be better and is (for $k$ odd) at least
$\tfrac{k+3}{2}$.
Since $H_k^g$ has
$n=kg+g-2+\chi(\text{$g$ is odd})$ vertices, hence its outerplanarity is as desired.   It remains
to argue the girth, so fix an arbitrary simple cycle $C$ in $H_k^g$ and assume that it visits $C_i,\dots,C_j$
for some $i\leq j$ and no other nested cycles.    If $i=j$ then $C$ equals $C_i$ and has length $g$.   If
$i<j$, then $C$ uses at least two connector-edges, and parts of $C_i$ and $C_j$ that connect such connector-edges;
each such part has length at least $(g{-}1)/2$ and so $|C|>g$ in this case.   So the shortest cycle has
length at least $g$, and this is achieved (and the cycle is a fence) at $C_1$.
\qed\end{proof}

\newcounter{lb} 
\setcounter{lb}{\thefigure}

\begin{figure}[pt]
    \hspace*{\fill}
    \includegraphics[page=1,width=.3\textwidth]{pictures/lowerb.pdf}
    \hspace*{\fill}
    \includegraphics[page=2,width=.3\textwidth]{pictures/lowerb.pdf}
    \hspace*{\fill}
    \includegraphics[page=3,width=.3\textwidth]{pictures/lowerb.pdf}
    \hspace*{\fill}
    \caption{ Graphs $G^4_3$, $H^4_3$ and $H^8_3$.
	}
    \label{fig:lb}
\end{figure}

\section{Outerplanarity and Diameter} 
\label{sec:diameter}

With much the same techniques as for Theorem~\ref{thm:main}, we can also bound the outerplanarity in terms of the diameter, as long as we permit a `correction-term' of $O(\sqrt{n})$.

\begin{theorem}
\label{thm:diameter}
Any simple plane graph $G$ with $n\geq 14$ vertices has a plane supergraph $H$ with $\rad{H}\leq \lceil \tfrac{1}{2}\diam{G} \rceil + 2\sqrt{n{-}4}-2$.
\end{theorem}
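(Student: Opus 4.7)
The plan is to adapt the framework of Theorem~\ref{thm:main}, using the diameter $D:=\diam{G}$ in place of the \girth as the governing parameter. We compute the augmentation $H$ of $G$ (with any non-cutvertex root $\rho$), the tree of peels $\calT$, and a separator-node $S$ via Theorem~\ref{thm:LT} whose removal leaves subtrees of $\calT\setminus S$ each of weight at most $\tfrac{n}{2}$. We then apply Lemma~\ref{lem:connectS} at $S$ to pick a vertex $s\in V(S)$ with $d_H(s,s')\le 2\sqrt{n-4}-2$ for every $s'\in V(S)$; this is valid whenever $\myalpha(S)\le n-4$, which holds as long as $S$ has both a strict ancestor and a descendant that together store at least four vertices. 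The detour-method then yields, for every vertex $z$ stored at a node $Z$ of $\calT$,
\[
d_H(s,z)\ \le\ d_\calT(S,Z)+2\sqrt{n-4}-2,
\]
so it suffices to prove $d_\calT(S,Z)\le \lceil D/2\rceil$ for every $Z$.

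The key property is that every edge of $G$ connects vertices whose peel-layers differ by at most one (which follows directly from the definition of peels), so $d_G(\rho,z)\ge \ell(z)$ for every vertex $z$, where $\ell(z)$ denotes the peel-layer of $z$. Since the eccentricity of $\rho$ in $G$ is at most $D$, we conclude $\ell(z)\le D$; in particular the outerplanarity of $G$ with outerface adjacent to $\rho$ is at most $D$. Now for a descendant $Z$ of $S$ at tree-distance $k$ we have $\ell(z)=\ell_S+k$, hence $d_\calT(S,Z)\le D-\ell_S$; and for an ancestor $Z$ we symmetrically get $d_\calT(S,Z)=\ell_S-\ell(Z)\le \ell_S$.

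The main obstacle is that the tree-separator balances vertex-weights but not the layer-index $\ell_S$, so in general $\ell_S$ need not be close to $\lceil D/2\rceil$, and the two bounds above only jointly yield $\max(\ell_S,D-\ell_S)$. To handle this, we plan to slide $s$ along the path in $\calT$ running from $\rho$ through $S$ toward a deepest descendant of $S$, until it lies in a node whose peel-layer is within $O(1)$ of $\lceil D/2\rceil$; Lemma~\ref{lem:connectS} applied at the new node still bounds the within-node cross-cost by $2\sqrt{n-4}-2$, since $\myalpha$ of any such node is at most $n-4$. A small case analysis reminiscent of Claims~\ref{cl:alphaSmall}--\ref{cl:notDeep} deals with corner cases (small $|V(S)|$ or small $\myalpha(S)$, which are already covered by the $2\sqrt{n-4}-2$ slack) and with nodes $Z\in V(\calT_R)$ that branch off above $S$ rather than being its ancestors: for the latter we either bound $d_\calT(S,Z)$ via the LCA in $\calT$, or exhibit a deep descendant of $S$ whose distance in $G$ from $z$ --- forced to pass through the outerface vertices of the LCA's component --- certifies a diameter large enough to absorb the error. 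The resulting algorithm runs in linear time by the same analysis as in Section~\ref{app:linear}.
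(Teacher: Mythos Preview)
Your framework is right up to the displayed inequality
$d_H(s,z)\le d_\calT(S,Z)+2\sqrt{n-4}-2$, obtained by walking from $s$ and $z$ to their least common ancestor and invoking Lemma~\ref{lem:connectS} there (with $\myalpha(A_0)\le\myalpha(S)\le n-4$). What remains is to choose $S$ so that $d_\calT(S,Z)\le\lceil D/2\rceil$ for \emph{every} node $Z$, and this is where your plan breaks.

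The weighted separator of Theorem~\ref{thm:LT} is a red herring: you yourself note that it balances vertex counts, not depth, and your remedy is to slide away from it along a root--to--leaf path, after which the separator property is never used again. But sliding to a node at peel-layer $\approx\lceil D/2\rceil$ is the wrong target. Take $\calT$ to be a root with two children, each heading a chain of $\lfloor D/2\rfloor$ further nodes; this has $\diam{\calT}=D$ and is perfectly consistent with $\diam{G}=D$. Any node $S'$ at depth $\lceil D/2\rceil$ is a leaf of one chain, and the opposite leaf sits at $\calT$-distance $D$, not $\lceil D/2\rceil$. Your fallback (``exhibit a deep descendant whose distance in $G$ from $z$ certifies a larger diameter'') cannot save this case: those two leaves are at $\calT$-distance exactly $D$, so no contradiction with $\diam{G}=D$ arises. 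The layer-difference bound $d_G(\rho,z)\ge\ell(z)$ that you derived is too weak here because it does not see branching in $\calT$.

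The paper's proof avoids all of this by dropping the weighted separator entirely and choosing $S$ as a \emph{center} of $\calT$: the node halfway along a longest path $U$--$V$ in $\calT$. This immediately gives $d_\calT(S,Z)\le\lceil\tfrac12\diam{\calT}\rceil$ for every $Z$ (otherwise $Z$ would be too far from $U$ or from $V$). The one extra ingredient is $\diam{\calT}\le\diam{H}\le\diam{G}$, which follows directly from Observation~\ref{obs:Tproperty}(\ref{it:Tproperty}): each edge of $H$ joins vertices whose storing nodes coincide or are adjacent in $\calT$, so $d_\calT(Y,Z)\le d_H(y,z)$ for all $y,z$. With this choice of $S$ there is no sliding and no case analysis; the proof is a short paragraph.
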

\begin{proof}
Compute layers $L_0,L_1,\dots$, augmentation $H$ and the tree of peels $\calT$ as in Section~\ref{sec:toolbox}.
If $\diam{\calT}=1$ then $\calT$ consists only of root $R$ and its unique child $N$ that stores all vertices
of $V(G)\setminus \rho$; in consequence $G$ has outerplanarity at most $2\leq 1+2\sqrt{n{-}4}-1$ by $n\geq 5$.
So assume that $\diam{\calT}\geq 2$ and let $U,V$ be two nodes of $\calT$ with $d_\calT(U,V)=\diam{\calT}$.
Let $S$ be the node `halfway between them', i.e., of distance $\lceil \diam{\calT} /2 \rceil$ from $U$ along the
unique path from $U$ to $V$ in $\calT$.   One easily verifies that $d_\calT(S,Z)\leq \lceil \diam{\calT} /2 \rceil$
for all nodes $Z$ of $\calT$, otherwise $Z$ would be too far away from either $U$ or $V$ since $\calT$ is a tree.    Also note that
$\diam{\calT}\leq \diam{H}$ by Obs.~\ref{obs:Tproperty}(\ref{it:Tproperty}) and $\diam{H}\leq \diam{G}$
since $H$ is a supergraph.    Finally observe that $\myalpha(S)\leq n{-}4$, for $S$ is an interior node by $\diam{\calT}\geq 2$
and stores at least three vertices by simplicity while at least one of $U,V$ is not an ancestor of $S$ and stores at least one vertex.

Pick $s\in S$ arbitrarily.      For any $z\in V(G)$ (say $z$ is stored at node $Z$), we find a path from $s$ to $z$ as follows
(see also Figure~\ref{fig:method2} and~\ref{fig:method3}):
Let $A_0$ be the least common ancestor of $S$ and $Z$ (quite possibly $A_0=S$ or $A_0=Z$).   Follow directed edges from $s$ and $z$
to reach vertices $s_0,z_0\in A_0$; the total number of these edges is $d_\calT(S,Z)\leq \lceil \tfrac{1}{2}\diam{\calT} \rceil \leq \lceil \tfrac{1}{2}\diam{G} \rceil$.   
Observe that $4\leq \lceil \sqrt{n{-}4}\rceil$ and also $\sqrt{\myalpha(A_0)}\leq \sqrt{\myalpha(S)}\leq \sqrt{n{-}4}$.
Using Lemma~\ref{lem:connectS} therefore $d_H(s_0,z_0)\leq 2\lceil \sqrt{n{-}4}\rceil -2$ and so $d_H(s,z)\leq  	\lceil \tfrac{1}{2} \diam{G} \rceil 
+2\sqrt{n{-}4}-2$.
\qed\end{proof}

Theorem~\ref{thm:diameter} implies {\bf C3} from the introduction: 
The fse-outerplanarity of $G$ is at most the fse-outerplanarity of $H$, which is at most $\rad{H}+1$. 
If $G$ is triangulated then necessarily $H=G$ and so $\rad{G}\leq \lfloor \tfrac{1}{2}\diam{G} \rfloor + O(\sqrt{n})$. 
Theorem~\ref{thm:diameterLower} implies that the `correction-term' of $O(\sqrt{n})$ cannot be avoided for the graph in Figure~\ref{fig:mesh1}. 

\begin{figure}
    \centering 
\includegraphics[scale=0.4,angle=90,page=2]{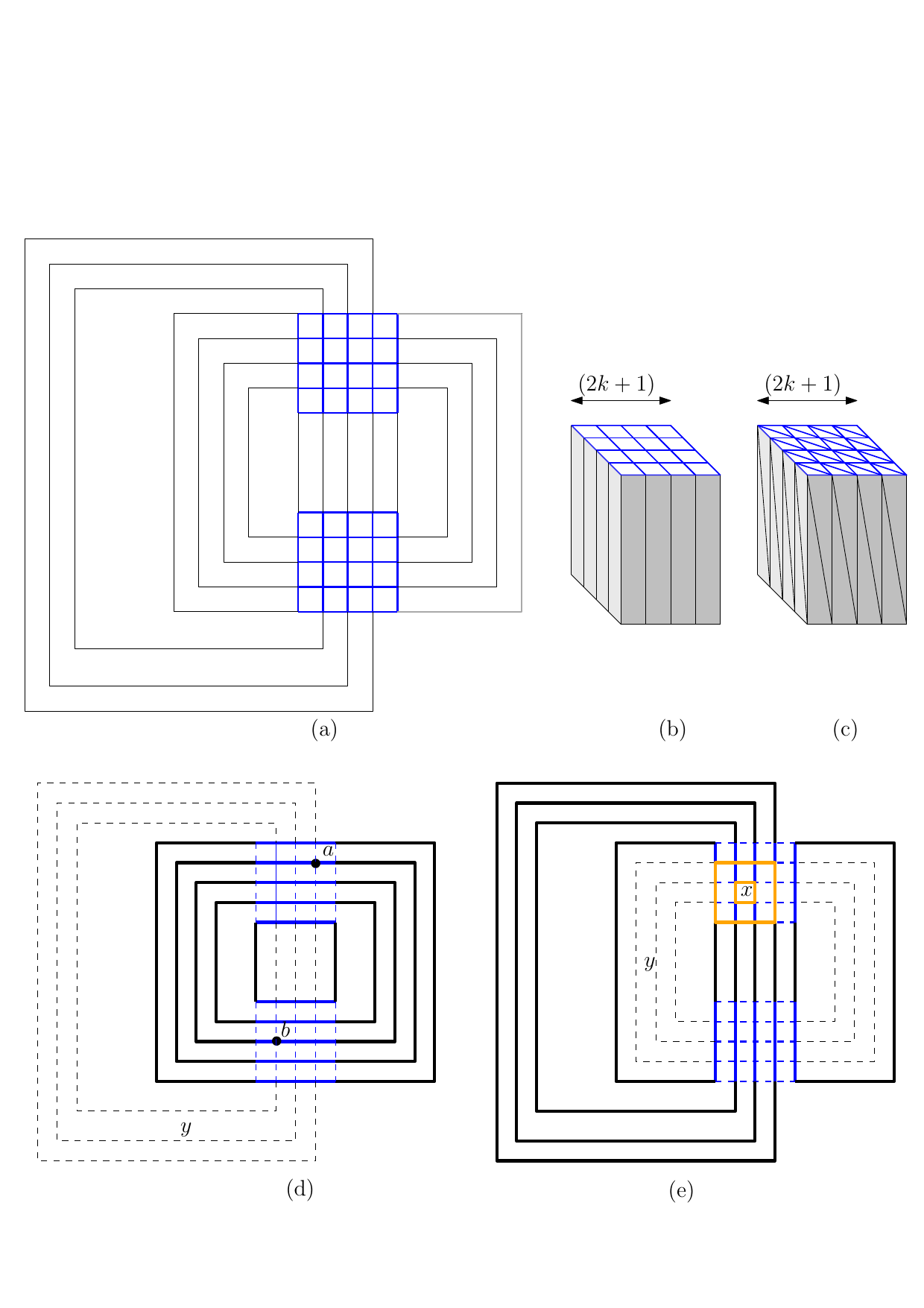} 
\hspace*{\fill} 
 \includegraphics[scale=0.4,page=3]{pictures/diameterNew.pdf} 
 \caption{Graph $H$ (for $k=2$) for the proof of Theorem~\ref{thm:diameterLower},
 both in a planar drawing and embedded  on the triangular prism.   
 Connector-edges are blue/dashed, $X_2$ is green/bold.  }
 \label{fig:mesh1}
 \end{figure}
 
\begin{restatable}{theorem}{Diameter}
For every positive integer $k$, there exists a triangulated graph $G$ with $n=(3k{+}1)(3k{+}2)$ vertices that has diameter at most $3k+1$ and radius at least  $2k=\tfrac{1}{2}\diam{G}+\Omega(\sqrt{n})$. 
\label{thm:diameterLower}
\end{restatable}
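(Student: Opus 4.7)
The plan is to construct $G$ as a triangulated planar graph whose combinatorial structure mirrors the surface of a triangular prism, as depicted in Figure~\ref{fig:mesh1}. Morally, $G$ consists of three triangulated ``panels'' $X_0, X_1, X_2$ (the three rectangular faces of the prism) glued cyclically along three ``spines'' (the vertical edges of the prism) and capped by two apex triangles. The dimensions of panels and spines are tuned so that $|V(G)| = (3k+1)(3k+2)$; the connector-edges highlighted in Figure~\ref{fig:mesh1} are the spines, and $X_2$ is one of the panels. A planar embedding is obtained by taking one panel as the outer face; since the prism surface is topologically a sphere, the graph is planar, and by triangulating each rectangular face with diagonals every face is a triangle.

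For the diameter upper bound, I would exhibit an explicit path of length at most $3k+1$ between any two vertices $u, v$. Within a single panel, the triangulated grid's diagonals permit travel between two grid points in $\max(\Delta x, \Delta y)$ steps rather than $\Delta x + \Delta y$. To connect vertices on non-adjacent panels, one routes through one of the apex triangles, which allows a one-step transition between panels that are not directly glued. A case analysis on the relative positions of $u$ and $v$ (same panel / adjacent panels / opposite panels, and near-apex vs.~middle heights) yields the bound $\diam{G}\leq 3k+1$.

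For the radius lower bound, fix any candidate center $s$; I would exhibit a vertex $z$ with $d_G(s,z) \geq 2k$. The vertex $s$ lies in some panel $X_i$ (possibly on a spine) at some height $h$; by the top--bottom symmetry we may assume $h \leq \tfrac{3k+1}{2}$. Choose $z$ deep in the bottom of a non-adjacent panel (say $X_{i+2}$), away from both spines. Any path from $s$ to $z$ must cover both a significant vertical distance (to reach the bottom of the prism) and a significant horizontal distance (to cross or bypass the intermediate panel). The top-apex shortcut is far from $z$, while the bottom-apex shortcut is far from $s$, so neither shortcut helps much. Summing the unavoidable vertical and horizontal contributions gives $d_G(s,z) \geq 2k$.

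The hard part will be establishing the radius lower bound uniformly over \emph{every} possible center $s$, rather than only a representative one. This requires a careful case analysis on the vertical coordinate of $s$ (near-top apex, middle, near-bottom apex) and on whether $s$ sits in a panel interior, on a spine, or on an apex triangle; in each case one must verify that no combination of apex shortcuts reduces the distance to the chosen $z$ below $2k$. The specific vertex count $(3k+1)(3k+2)$ is calibrated so that the radius bound $2k$ is achievable despite the diameter being only $3k+1$, yielding the claimed gap $\rad{G} - \tfrac{1}{2}\diam{G} = \Omega(\sqrt{n})$.
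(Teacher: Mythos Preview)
Your high-level picture (a triangulated prism surface) is correct, but you have misread the construction: in the paper, the bulk of the vertices lie on the two \emph{triangular} faces of the prism, not on the three rectangular ones. Concretely, $G$ is built from two copies $M,M'$ of the triangular grid $\{(x,y,z)\in\mathbb{Z}_{\geq 0}^3: x+y+z=3k\}$, connected by edges between matching boundary vertices (those with some coordinate~$0$); the rectangular faces of the prism carry only these thin ladders of connector-edges. The object labelled $X_2$ in Figure~\ref{fig:mesh1} is not a panel but the \emph{level-cycle} of all vertices with $x$-coordinate~$2$. Your ``three panels glued along spines'' model is a different graph, and in particular your reference to a ``non-adjacent panel $X_{i+2}$'' does not parse (three cyclically glued panels are pairwise adjacent).

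More importantly, your radius argument is missing the idea that makes the proof work. You propose to pick, for each candidate centre $s$, a far-away vertex $z$ and then lower-bound $d_G(s,z)$ by separately accounting for horizontal and vertical travel; you correctly flag this as the hard part and defer it to an unspecified case analysis. But in a triangulated grid the diagonals let you cover horizontal and vertical distance \emph{simultaneously}, so a naive ``sum of contributions'' bound fails, and the case analysis you sketch would be delicate at best. The paper avoids all of this with a one-line separator argument: for any vertex $v$, some coordinate (say $x$) is at most $k$, and the level-cycles $X_{k+1},\dots,X_{3k-1}$ are disjoint fences each separating $v$ from the apex with $x=3k$; hence any $v$--apex path has length at least~$2k$. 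This handles every centre uniformly with no cases. The diameter bound is equally clean: walk from level $X_i$ to level $X_j$ in $j-i$ steps, then around the cycle $X_j$ in at most $\tfrac{1}{2}|X_j|=3k-j+1$ steps.
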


\begin{proof}
Let $M$ be the \emph{triangular grid of sidelength $3k$} defined as follows.   Each vertex of $M$ corresponds to a point $(x,y,z)$ in $\mathbb{Z}^3$ that satisfies $x,y,z\geq 0$ and $x+y+z=3k$.   Two such points are connected if and only if their Euclidean distance is $\sqrt{2}$, i.e., one of the three coordinates has changed by ${+}1$ while another has changed by ${-}1$.  Let $M'$ be a second copy of this grid, and add \emph{connector-edges} $(v,v')$ for any $v\in M$ and $v'\in M'$ that have the same coordinates, and one of these coordinates is 0.   We can visualize the resulting graph $H$ as lying on the triangular prism, after omitting the $z$-coordinates, see also Figure~\ref{fig:mesh1}.   Graph $H$ has $2\cdot \sum_{i=0}^{3k} (i{+}1) =(3k{+}1)(3k{+}2)$ vertices as desired.   It is not quite triangulated; let $G$ be obtained from $H$ by inserting arbitrary diagonals into the quadrangular faces incident to the connector-edges.

Define $X_{3k}$ to be the two vertices with $x$-coordinate $3k$.
For  $i=0,\dots,3k-1$, consider the set of all vertices that have $x$-coordinate $i$, and note that these form a cycle $X_i$ of length $6k-2i+2$. 
Using these cycles, it is very easy to lower-bound the radius.   Consider an arbitrary vertex $v$, say it has coordinates $(x,y,z)$.   Since $x+y+z=3k$ we may (up to renaming of coordinates) assume that $x\leq k$.  Let $w\in X_{3k}$. Then each of the disjoint cycles $X_{k+1},X_{k+2},\dots,X_{3k-1}$ contains $v\in X_k$ on one side and $w\in X_{3k}$ on the other.   So any path from $v$ to $w$ must contain at least one vertex from each of these cycles, and $d_H(v,w)\geq 2k$.   So any vertex has eccentricity at least $2k$, and $\rad{H}\geq 2k$.

Now we upper-bound the diameter.   Fix two arbitrary vertices $v,v'$, and assume that they have $x$-coordinates $i$ and $j$ respectively; up to renaming $i\leq j$.   We can walk from $v\in X_i$ to some vertex $w\in X_j$ in $j-i$ steps, since for all $\ell<3k$ every vertex in $X_\ell$ has at least one neighbour in $X_{\ell+1}$. So $d_H(v,w)\leq j-i$.   Vertices $w$ and $v'$ both belong to $X_j$, a cycle of length $6k-2j+2$, and hence $d_H(w,v')\leq 3k-j+1$.   Therefore $d_H(v,v')\leq (j{-}i)+(3k{-}j{+}1)\leq 3k+1$ and since this holds for all vertex-pairs we have $\diam{H}\leq 3k+1$.
\qed\end{proof}

\section{Remarks}

While the `$\tfrac{n}{2g}$'-part of our bound in Theorem~\ref{thm:main} is tight, the `$+2g$' part could use improvement.   We can easily prove (with the same techniques as in Theorem~\ref{thm:diameter}) a bound of $\tfrac{n-2}{2g}+O(\sqrt{n})$, but does every planar graph
with fence-girth $g$ have outerplanarity $\tfrac{n}{2g}+O(1)$?   

Also, our linear-time algorithm carefully side-steps the question of how to compute the fence-girth (it instead uses 
\changed{a} parameter $g$
for which the node-sizes of $\calT$ are big enough).  Testing whether the fence-girth is at most $k$ is easily done if the spherical 
embedding is fixed and $k$ is a constant, using the subgraph isomorphism algorithm by Eppstein \cite{DBLP:journals/jgaa/Eppstein99}.
But the fence-girth need
not be constant 
and Eppstein's algorithm does not work if the embedding can be changed.  Algorithms to compute the girth \cite{DBLP:journals/siamcomp/ChangL13}
do not seem transferrable  to the fence-girth.  How easy is it to compute the fence-girth, both when the spherical embedding is fixed and
when it can be chosen freely?

\section*{Acknowledgments}
Research by TB supported by NSERC; FRN RGPIN-2020-03958. Research by DM  supported by NSERC; FRN RGPIN-2018-05023.
 
\bibliographystyle{splncs04}
\bibliography{paper}

\end{document}